\newtheorem{proposition}{Proposition}
\newtheorem{corollary}{Corollary}
\begin{document}

\title{Improving sum uncertainty relations with the quantum Fisher information}
\author{Shao-Hen Chiew}
\affiliation{Laboratoire Kastler Brossel, ENS-Universit\'{e} PSL, CNRS, Sorbonne Universit\'{e}, Coll\`{e}ge de France, 24 Rue Lhomond, 75005, Paris, France}
\affiliation{Department of Physics,
National University of Singapore,
Singapore 117551, Singapore}
\affiliation{Centre for Quantum Technologies, National University of Singapore, Singapore 117551, Singapore}
\affiliation{Mines-ParisTech, PSL Research University, 60 Boulevard Saint-Michel, 75006 Paris, France}
\author{Manuel Gessner}
\email{manuel.gessner@icfo.eu}
\affiliation{Laboratoire Kastler Brossel, ENS-Universit\'{e} PSL, CNRS, Sorbonne Universit\'{e}, Coll\`{e}ge de France, 24 Rue Lhomond, 75005, Paris, France}
\affiliation{ICFO-Institut de Ci\`{e}ncies Fot\`{o}niques, The Barcelona Institute of Science and Technology, Avinguda Carl Friedrich Gauss 3, 08860, Castelldefels (Barcelona), Spain}
\date{\today}

\begin{abstract}
We show how preparation uncertainty relations that are formulated as sums of variances may be tightened by using the quantum Fisher information to quantify quantum fluctuations. We apply this to derive stronger angular momentum uncertainty relations, which in the case of spin-$1/2$ turn into equalities involving the purity. Using an analogy between pure-state decompositions in the Bloch sphere and the moment of inertia of rigid bodies, we identify optimal decompositions that achieve the convex- and concave-roof decomposition of the variance. Finally, we illustrate how these results may be used to identify the classical and quantum limits on phase estimation precision with an unknown rotation axis.
\end{abstract}

\maketitle

\section{Introduction}
The uncertainty principle expresses that independent measurements of non-commuting observables cannot be arbitrarily sharp when the system is prepared in the same quantum state. The aim of preparation uncertainty relations is to make precise, quantitative statements about the fluctuations of incompatible observables for a given quantum state. The most well known uncertainty relation is the Heisenberg-Robertson inequality \cite{Heisenberg,PhysRev.34.163}: For arbitrary quantum states $\rho$ and observables $A$ and $B$, we have
\begin{equation} \label{eq:robertson}
  (\Delta A)_{\rho} ^2  (\Delta B)_{\rho} ^2 \geq \frac{1}{4} \left\langle i[A,B] \right\rangle_\rho^2,
\end{equation}
where $[A,B] = AB - BA$ is the commutator and $(\Delta A)^2_{\rho}=\langle A^2\rangle_{\rho}-\langle A\rangle_{\rho}^2$ is the variance, with $\langle A\rangle_{\rho}=\mathrm{Tr}\{A\rho\}$. The Heisenberg-Robertson inequality, however, does not make general statements about the set of tuples $((\Delta A)_{\rho} ^2,(\Delta B)_{\rho} ^2)$ that are compatible with the laws of quantum mechanics since the lower bound depends on the quantum state $\rho$ and becomes trivial for states whose expectation value for the commutator is zero. Nevertheless, constraints on these measurement fluctuations exist even when the right-hand side in~(\ref{eq:robertson}) vanishes~\cite{Hofmann,RivasLuisPRA2008,HePRA2011,HuangPRA2012,Busch_inequality,MacconePRL2014,Dammeier_2015,BagchiPRA2016,tightqubit,MondalPRA2017,GiordaPRA2019,KarolJPA2020}.

Sum uncertainty relations may avoid these problems by providing state-independent lower bounds on sums of variances~\cite{Hofmann,HePRA2011,Dammeier_2015,tightqubit}. For instance, any quantum state $\rho$ of a spin-$s$ system satisfies for the three components  $L_{\vec{n}_1},L_{\vec{n}_2},L_{\vec{n}_3}$ of the SU(2) angular momentum algebra \cite{Hofmann,Dammeier_2015}
\begin{equation} \label{eq:varsumreln3}
  (\Delta L_{\vec{n}_1})_{\rho} ^2 + (\Delta L_{\vec{n}_2})_{\rho} ^2 + (\Delta L_{\vec{n}_3})_{\rho} ^2 \geq s
\end{equation}
and~\cite{HePRA2011,Dammeier_2015}
\begin{equation} \label{eq:varsumreln2}
  (\Delta L_{\vec{n}_1})_{\rho} ^2 + (\Delta L_{\vec{n}_2})_{\rho} ^2 \geq c(s),
\end{equation}
where $c(s)$ are constants. For small $s$ these constants are known analytically whereas for larger $s$ they can be determined numerically and show an asymptotic scaling of $c(s)\sim s^{2/3}$ \cite{HePRA2011,Dammeier_2015}. For the case of qubits $(s=1/2)$, the constant reads $c(1/2) = 1/4$, and a state-independent bound can be stated for two observables with arbitrary orientations $\vec{a}$ and $\vec{b}$ as \cite{Busch_inequality}
\begin{equation} \label{eq:Busch_ineq}
  (\Delta L_{\vec{a}})_{\rho} ^2 + (\Delta L_{\vec{b}})_{\rho} ^2 \geq \frac{1}{4}(1-|\vec{a}\cdot\vec{b}|).
\end{equation}

While standard formulations of uncertainty relations are based on variances, they can also be stated and improved in terms of more general quantifiers of fluctuation such as entropies \cite{PhysRevLett.50.631,MaassenUffink,RMP2017} and the quantum Fisher information (QFI)~\cite{Braunstein&Caves,BraunsteinCavesMilburn,T_th_2014,Pezze_QTOPE,fisherrobertson,YadinNatCommun2021}. By quantifying the sensitivity of a quantum state $\rho$ to small perturbations generated by $A$~\cite{Braunstein&Caves}, the QFI $F_Q[\rho,A]$ plays a fundamental role in identifying the precision limits of measurements in quantum metrology~\cite{HelstromBOOK,ParisIntJQI2009,Pezze_QTOPE,Advances,T_th_2014}. For example, a tighter formulation of the Heisenberg-Robertson inequality is implied by a lower bound on the QFI~\cite{Kholevo,HottaOzawa,PS09,fisherrobertson,Pezze_QTOPE,GessnerPRL2019}:
\begin{equation} \label{eq:robertson2}
F_Q[\rho,A] (\Delta B)_{\rho} ^2 
  \geq \left\langle i[A,B] \right\rangle_\rho^2.
\end{equation}
Since $4(\Delta A)_{\rho} ^2\geq F_Q[\rho,A]$ holds for arbitrary states $\rho$ (and is saturated by all pure states)~\cite{T_th_2014,Pezze_QTOPE}, Eq.~(\ref{eq:robertson2}) is, indeed, a tighter condition than Eq.~(\ref{eq:robertson}) but inherits the drawback of being state dependent.

In this article, we show how sum uncertainty relations of the kind~(\ref{eq:varsumreln3}),~(\ref{eq:varsumreln2}), and ~(\ref{eq:Busch_ineq}) can be tightened with the QFI. Using the fact that the QFI is the convex roof of the variance~\cite{roof,decomposition}, we demonstrate how any variance-based sum uncertainty relation can be improved by replacing one of the variances with the QFI. For the qubit case we prove a set of tight relations for the QFI and variances as a function of the state's purity. By establishing an analogy between pure-state decompositions and the classical moment of inertia of rigid bodies, we identify optimal state decompositions that achieve the corresponding convex and concave roof constructions. Finally, we use these results to identify classical and quantum sensitivity limits for the estimation of angular parameters with an unknown rotation axis.
 
\section{Variance and quantum Fisher information} 
We begin by reviewing some of the most important properties of the variance and the QFI---the two quantities of central interest in this article for the quantification of fluctuations. Interestingly, the two quantities represent opposite extrema of the average variance of pure-state decompositions. A generic quantum state $\rho$ can be represented in terms of inequivalent decompositions $\{p_k, \ket{\Psi_k}\}$ consisting of a probability distribution $p_k$ and a set of pure states $\Psi_k=\ket{\Psi_k}\bra{\Psi_k}$ that are not necessarily pairwise orthogonal but yield $\rho = \sum_{k} p_k \ket{\Psi_k}\bra{\Psi_k}$. The variance is the concave roof of itself \cite{roof}:
\begin{equation} \label{eq:varroof}
      (\Delta A)_{\rho}^2 = \max_{\{p_k, \ket{\Psi_k}\}} \sum_{k} p_k (\Delta A)_{\Psi_k} ^2;
\end{equation}
that is, it corresponds to the decomposition that maximizes the average pure-state variance. The QFI satisfies the opposite property; that is, it is the convex roof~\cite{roof,decomposition},
\begin{equation} \label{eq:mindecomp}
      \frac{1}{4}F_Q[\rho,A] = \min_{\{p_k,  \ket{\Psi_k}\}} \sum_{k} p_k (\Delta A)_{\Psi_k} ^2.
\end{equation}
Optimal decompositions that achieve either the maximum~(\ref{eq:varroof}) or the minimum~(\ref{eq:mindecomp}) can always be found~\cite{decomposition}. These results further imply for any mixture $\rho=\sum_kp_k\rho_k$ the following sequence of bounds:
\begin{equation} \label{eq:QFI<4Var}
        \frac{1}{4}F_Q[\rho,A] \leq  \frac{1}{4}\sum_{k} p_kF_Q[\rho_k,A]\leq \sum_{k} p_k (\Delta A)_{\rho_k} ^2\leq (\Delta A)_{\rho}^2,
\end{equation}
and all terms coincide if $\rho$ is a pure state. In particular, we observe the convexity of the QFI and concavity of the variance.

The QFI is a quantity of central interest in the theory of quantum metrology~\cite{Pezze_QTOPE,Advances,T_th_2014}. Quantum phase estimation, for instance, describes the estimation of a phase parameter $\theta$ that is imprinted by a unitary evolution into a quantum state via $\rho(\theta)=e^{-iA\theta}\rho e^{iA\theta}$. The quantum Cram\'{e}r-Rao bound states that the variance of arbitrary unbiased estimators $\theta_{\rm{est}}$ for $\theta$ cannot be lower than the inverse of the QFI \cite{HelstromBOOK,Braunstein&Caves,Pezze_QTOPE,Advances,T_th_2014}:
\begin{equation} \label{eq:CRB}
(\Delta\theta_{\rm{est}})^2\geq F_Q[\rho,A]^{-1}.
\end{equation}

Besides convexity~(\ref{eq:QFI<4Var}), the QFI satisfies additivity~\cite{T_th_2014,Pezze_QTOPE}:
    \begin{equation} \label{eq:addQFI}
        F_Q[\rho^{(1)} \otimes \rho^{(2)}, A^{(1)} \otimes \mathbb{1} + \mathbb{1} \otimes A^{(2)}] = F_Q[\rho^{(1)},A^{(1)}]+F_Q[\rho^{(2)},A^{(2)}],
    \end{equation}
where $\mathbb{1}$ denotes the local identity operator. Moreover, we note that the QFI vanishes if and only if the state $\rho$ commutes with the generator $A$:
\begin{equation} \label{eq:iff0}
    [\rho,A] = 0
    \iff
     F_Q[\rho,A] = 0.
\end{equation}
\begin{proof}
For the forward direction, note that the QFI can be computed with a closed expression \cite{Pezze_QTOPE,T_th_2014}. Expanding $\rho$ in its eigenbasis so that $\rho = \sum_{k} \lambda_k \ket{\Phi_k}\bra{\Phi_k}$, the QFI reads
\begin{equation} \label{eq:explicitfisher}
F_Q[\rho,A] = 2 \sum_{k,l} \frac{(\lambda_k-\lambda_l)^2}{\lambda_k+\lambda_l} |\bra{\Phi_k}A\ket{\Phi_l}|^2.
\end{equation}
The $k=l$ terms vanish. The $k \neq l$ terms also vanish, since $\rho$ and $A$ commute and are thus simultaneously diagonalizable, so the off diagonal terms $\bra{\Phi_k}A\ket{\Phi_l}$ are zero.

For the backward direction, we start with the decomposition $\rho = \sum_{k} p_k \ket{\Psi_k}\bra{\Psi_k}$ that reaches the minimum of Eq.~(\ref{eq:mindecomp}), so we have
\begin{equation}
      \frac{1}{4}F_Q[\rho,A] = \sum_{k} p_k (\Delta A)_{\Psi_k} ^2 = 0.
\end{equation}
Since the $p_k$ are positive, each $(\Delta A)_{\Psi_k} ^2$ in the sum must vanish, which implies that the $\ket{\Psi_k}$ are eigenvectors of $A$. Thus, $\rho$ is diagonal in the eigenbasis of $A$ and consequently commutes with $A$.
\end{proof}

\section{Improving sum uncertainty relations}
\subsection{General result}
We first introduce a general result before applying it to sum uncertainty relations in the next subsection.
\begin{proposition}[Tightening variance inequalities with the QFI]\label{prop:1}
Let $f$ be a convex function of the set of quantum states, i.e., $f(\rho) \leq \sum_{k} p_k f(\rho_k)$ for $\rho=\sum_kp_k\rho_k$ and let $A$ be an arbitrary observable. If for any pure state $\Psi=\ket{\Psi}\bra{\Psi}$,
	\begin{equation} \label{eq:one}
	(\Delta A)_{\Psi} ^2 \geq f(\Psi)
	\end{equation}
holds, then for any mixed state $\rho$ we obtain the inequality
	\begin{equation}\label{eq:resultprop1}
    \frac{1}{4}F_Q [\rho, A] \geq f(\rho),
	\end{equation}
where $F_Q[\rho,A]$ is the QFI.
\end{proposition}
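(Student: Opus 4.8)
The plan is to leverage the convex-roof characterization of the QFI stated in Eq.~(\ref{eq:mindecomp}) together with the convexity of $f$. The key observation is that the pure-state hypothesis~(\ref{eq:one}) lets us control each term in an \emph{arbitrary} pure-state decomposition of $\rho$, while the convex-roof formula identifies the QFI with the \emph{minimizing} decomposition. I would therefore work with an optimal decomposition and chain these two facts.

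Concretely, I would first invoke Eq.~(\ref{eq:mindecomp}) and the fact that an optimal decomposition achieving the minimum exists~\cite{decomposition}: choose $\{p_k,\ket{\Psi_k}\}$ with $\rho=\sum_k p_k\ket{\Psi_k}\bra{\Psi_k}$ such that
\begin{equation}
\frac{1}{4}F_Q[\rho,A]=\sum_k p_k(\Delta A)^2_{\Psi_k}.
\end{equation}
Next I would apply the pure-state assumption~(\ref{eq:one}) termwise, bounding $(\Delta A)^2_{\Psi_k}\geq f(\Psi_k)$, which gives
\begin{equation}
\frac{1}{4}F_Q[\rho,A]=\sum_k p_k(\Delta A)^2_{\Psi_k}\geq\sum_k p_k f(\Psi_k).
\end{equation}
Finally I would use the convexity of $f$, namely $f(\rho)\leq\sum_k p_k f(\Psi_k)$ applied to this particular decomposition $\rho=\sum_k p_k\Psi_k$, to conclude $\sum_k p_k f(\Psi_k)\geq f(\rho)$, and combining the two displayed inequalities yields $\tfrac{1}{4}F_Q[\rho,A]\geq f(\rho)$ as claimed.

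I do not anticipate a serious technical obstacle here; the argument is a clean three-line chaining of (i) the convex-roof identity, (ii) the pure-state hypothesis, and (iii) convexity of $f$. The one point requiring mild care is the logical direction of each inequality: the convex-roof \emph{minimum} is used as an equality for the optimal decomposition, whereas the convexity of $f$ is a bound valid for \emph{every} decomposition—so it must be applied to the same optimal $\{p_k,\ket{\Psi_k}\}$ for the two facts to compose correctly. An equally valid alternative is to avoid invoking existence of an optimizer and instead take an infimum: bound $\sum_k p_k(\Delta A)^2_{\Psi_k}\geq f(\rho)$ for \emph{every} decomposition (using~(\ref{eq:one}) and convexity), then take the infimum over decompositions on the left, which by~(\ref{eq:mindecomp}) equals $\tfrac{1}{4}F_Q[\rho,A]$; this phrasing sidesteps any appeal to attainability of the minimum.
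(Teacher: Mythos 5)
Your proposal is correct and follows exactly the paper's own argument: take the optimal decomposition achieving the convex-roof minimum in Eq.~(\ref{eq:mindecomp}), apply the pure-state bound~(\ref{eq:one}) termwise, and conclude by convexity of $f$. The infimum-based variant you mention is a valid rephrasing but not needed, since attainability of the minimum is guaranteed by Ref.~\cite{decomposition}, as the paper notes.
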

\begin{proof}
Let $\{p_k, \ket{\Psi_k}\}$ be the decomposition of $\rho$ that achieves the minimum in Eq.~(\ref{eq:mindecomp}). Using~(\ref{eq:one}) and the convexity of $f$, we obtain
\begin{align}
  \frac{1}{4}F_Q [\rho, A] = \sum_{k} p_k (\Delta A)_{{\Psi_k}} ^2 \geq 
  \sum_{k} p_k f(\Psi_k) \geq
  f(\rho).\notag
\end{align}
\end{proof}

Combining the result~(\ref{eq:resultprop1}) with~(\ref{eq:QFI<4Var}), we obtain the weaker bound $(\Delta A)_{\rho} ^2 \geq  f(\rho)$. This bound follows directly from the concavity of the variance~(\ref{eq:QFI<4Var}) and the convexity of $f$:
\begin{equation}
  (\Delta A)_{\rho} ^2 \geq 
  \sum_{k} p_k (\Delta A)_{\Psi_k} ^2 \geq
  \sum_{k} p_k f(\Psi_k) \geq
  f(\rho).
\end{equation}
However, the tighter bound~(\ref{eq:resultprop1}) involving the QFI requires the existence of a decomposition that achieves the minimum in~(\ref{eq:mindecomp}). Moreover, an inequality involving only pure states is sufficient to establish inequalities for mixed states.

We also note that a dual result involving any concave function $g$, i.e., $g(\rho) \geq \sum_{k} p_k g(\rho_k)$, can be obtained with a similar proof, exploiting the fact that the variance is the concave roof of itself, Eq.~(\ref{eq:varroof}). That is, suppose the following inequality holds for any pure state $\Psi$:
\begin{equation}
    (\Delta A)_{\Psi} ^2 \leq g(\Psi).
\end{equation}
Then for all mixed states $\rho$ we also have:
\begin{equation}
    (\Delta A)_{\rho} ^2 \leq g(\rho).
\end{equation}
In combination with~(\ref{eq:QFI<4Var}) we obtain the weaker bound $\frac{1}{4}F_Q[\rho,A] \leq g(\rho)$, which follows immediately from the convexity of the QFI~(\ref{eq:QFI<4Var}) and the concavity of $g$.

\subsection{Tighter sum uncertainty relations}
Let us now apply Proposition~\ref{prop:1} to derive tighter sum uncertainty relations. We consider the quantum mechanical angular momentum algebra on a $(2s+1)$-dimensional Hilbert space of a spin $s$ system with $s=1/2,1,3/2,\dots$. We denote the angular momentum operator along a direction $\vec{n}\in\mathbb{R}^3$ by $L_{\vec{n}}$. Throughout this article we assume that $\{\vec{n}_1, \vec{n}_2, \vec{n}_3\}$ is an arbitrary orthonormal basis of $\mathbb{R}^3$.

\begin{proposition}[Application to sum uncertainty relations]
Arbitrary quantum states $\rho$ of a spin $s$ system satisfy
\begin{equation} \label{eq:qfisum3}
\frac{1}{4}F_Q [\rho, L_{\vec{n}_1}] + (\Delta L_{\vec{n}_2})_\rho ^2 + (\Delta L_{\vec{n}_3})_\rho ^2\geq s,
\end{equation}
and
\begin{equation} \label{eq:qfisum2}
\frac{1}{4}F_Q [\rho, L_{\vec{n}_1}] + (\Delta L_{\vec{n}_2})_\rho ^2 \geq c(s),
\end{equation}
where the $c(s)$ are the same constants that appear in~(\ref{eq:varsumreln2}). For $s=1/2$, we also have:
\begin{equation}
  \frac{1}{4}F_Q [\rho, L_{\vec{a}}] + (\Delta L_{\vec{b}})_{\rho} ^2 \geq \frac{1}{4}(1-|\vec{a}\cdot\vec{b}|).
\end{equation}

\end{proposition}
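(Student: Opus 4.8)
The plan is to recognize each of the three claimed inequalities as a direct application of Proposition~\ref{prop:1}, obtained by transferring the surviving ordinary-variance terms to the right-hand side and absorbing them into the convex function $f$. For the three-term relation~(\ref{eq:qfisum3}) I would set $A=L_{\vec{n}_1}$ and define
\[
f(\rho)=s-(\Delta L_{\vec{n}_2})_\rho^2-(\Delta L_{\vec{n}_3})_\rho^2.
\]
The hypothesis of Proposition~\ref{prop:1} requires $f$ to be convex on the set of states: since the variance is concave in $\rho$ --- as recorded in the last inequality of~(\ref{eq:QFI<4Var}) --- each $-(\Delta L_{\vec{n}_i})_\rho^2$ is convex, and adding the constant $s$ preserves convexity. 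The pure-state premise~(\ref{eq:one}) then reads $(\Delta L_{\vec{n}_1})_\Psi^2\geq f(\Psi)$, which upon rearrangement is precisely the known relation~(\ref{eq:varsumreln3}) evaluated on the pure state $\Psi$. Since~(\ref{eq:varsumreln3}) holds for \emph{all} states, it holds in particular for pure states, so the premise is satisfied and Proposition~\ref{prop:1} delivers~(\ref{eq:qfisum3}).

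The remaining two inequalities follow the identical template. For~(\ref{eq:qfisum2}) I take $A=L_{\vec{n}_1}$ and $f(\rho)=c(s)-(\Delta L_{\vec{n}_2})_\rho^2$, so that the pure-state premise is the two-term relation~(\ref{eq:varsumreln2}) restricted to pure states. For the spin-$1/2$ inequality I take $A=L_{\vec{a}}$ and $f(\rho)=\frac{1}{4}(1-|\vec{a}\cdot\vec{b}|)-(\Delta L_{\vec{b}})_\rho^2$, whose pure-state premise is the Busch relation~(\ref{eq:Busch_ineq}). In each case $f$ is a constant minus a variance and is therefore convex, and the premise is exactly the corresponding already-established relation applied to pure states, so Proposition~\ref{prop:1} yields the stated QFI-strengthened forms.

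There is no genuine analytic obstacle here; the whole content of the argument is the structural observation that the additive constants, together with the concavity of the variance, allow one to package the leftover variance terms into a single convex $f$. The one point that deserves care is precisely this convexity check: Proposition~\ref{prop:1} fails for non-convex $f$, and the scheme works only because subtracting a concave variance from a constant produces a convex function. It is also worth emphasizing, once the three bounds are in hand, that they are genuine strengthenings rather than restatements: by the QFI--variance ordering in~(\ref{eq:QFI<4Var}) one has $\frac{1}{4}F_Q[\rho,L_{\vec{n}_1}]\leq(\Delta L_{\vec{n}_1})_\rho^2$, so each new left-hand side lower-bounds the corresponding original one, recovering~(\ref{eq:varsumreln3}),~(\ref{eq:varsumreln2}), and~(\ref{eq:Busch_ineq}) as weaker consequences.
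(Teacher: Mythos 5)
Your proposal is correct and follows essentially the same route as the paper: it applies Proposition~\ref{prop:1} with the convex functions $f_1(\Psi)=s-(\Delta L_{\vec{n}_2})_{\Psi}^2-(\Delta L_{\vec{n}_3})_{\Psi}^2$, $f_2(\Psi)=c(s)-(\Delta L_{\vec{n}_2})_{\Psi}^2$, and $f_3(\Psi)=\frac{1}{4}(1-|\vec{a}\cdot\vec{b}|)-(\Delta L_{\vec{b}})_{\Psi}^2$, with convexity deduced from the concavity of the variance, exactly as in the paper's proof. Your additional remark that these bounds strengthen the original variance relations via $\frac{1}{4}F_Q[\rho,L_{\vec{n}_1}]\leq(\Delta L_{\vec{n}_1})_\rho^2$ is a correct observation the paper makes elsewhere.
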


\begin{proof}
Equations~(\ref{eq:varsumreln3}), (\ref{eq:varsumreln2}), and (\ref{eq:Busch_ineq}) imply that inequality~(\ref{eq:one}) holds for $A=L_{\vec{n}_1}$ or $A=L_{\vec{a}}$ with the lower bounds
\begin{equation}
    f_1(\Psi) = s - (\Delta L_{\vec{n}_2})_{\Psi} ^2 - (\Delta L_{\vec{n}_3})_{\Psi} ^2,
\end{equation}
\begin{equation}
    f_2(\Psi) = c(s) - (\Delta L_{\vec{n}_2})_{\Psi} ^2,
\end{equation}
and
\begin{equation}
    f_3(\Psi) = \frac{1}{4}(1-|\vec{a}\cdot\vec{b}|) - (\Delta L_{\vec{b}})_{\Psi} ^2
\end{equation}
respectively. Convexity of $f_1$, $f_2$, and $f_3$ follows from the concavity of the variance, Eq.~(\ref{eq:QFI<4Var}). Applying Proposition \ref{prop:1} then leads to the results.
\end{proof}

Plotting values of variance and the QFI of randomly generated mixed states allows us to illustrate these results and to reveal additional features of interest. We provide the uncertainty plots of $s=1/2$ and $s=1$ for $L_x$, $L_y$, and $L_z$ in Figs.~\ref{fig:s_half} and \ref{fig:s_one}, respectively.

\begin{figure}[tb]
     \centering
        \includegraphics[width=.51\linewidth]{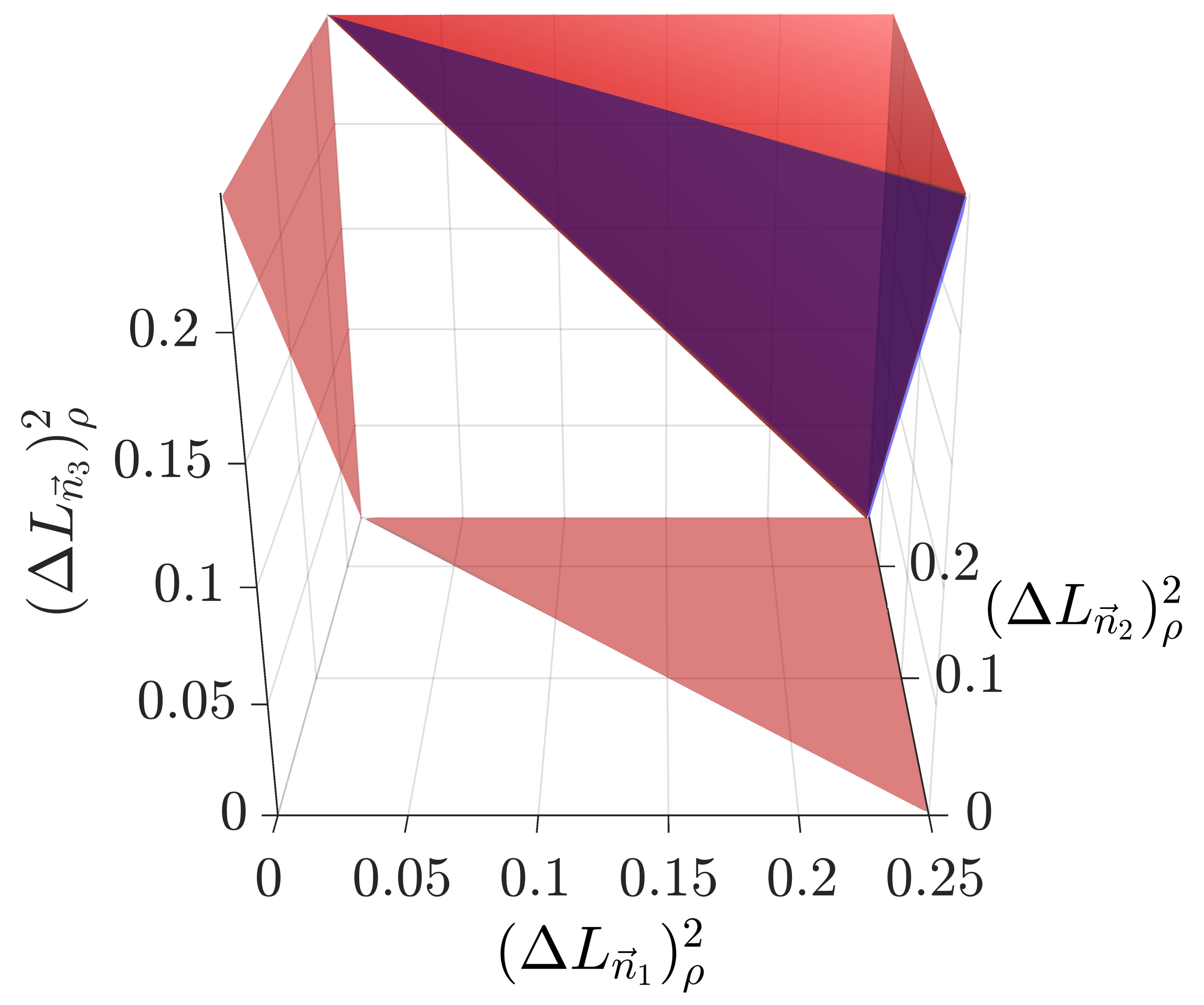}     
        \includegraphics[width=.480\linewidth]{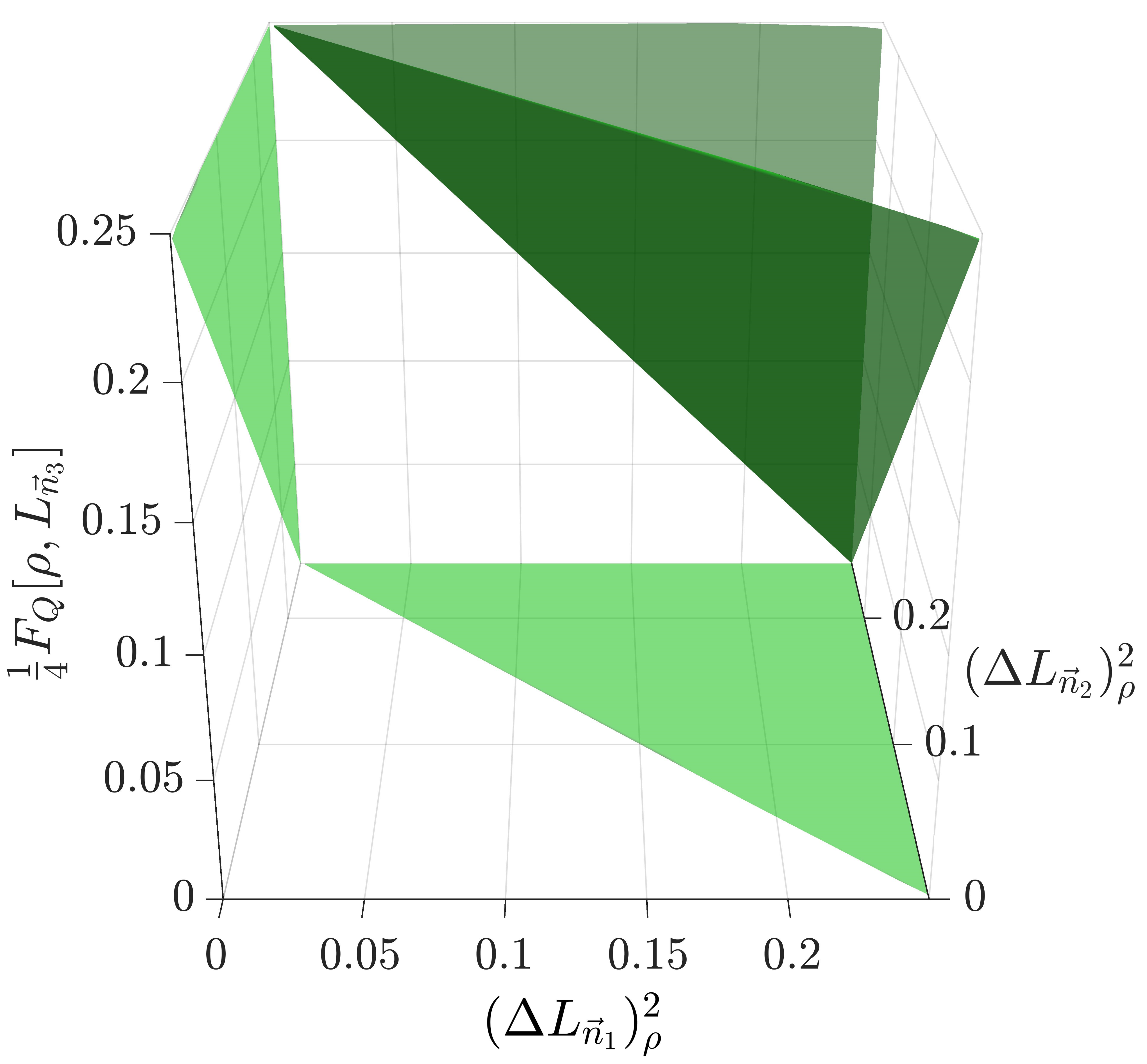}       
\caption{Uncertainty diagrams for $s=1/2$, where projections on three planes are included for visualisation. Each point in the colored regions (except the projections) represents the fluctuations of a possible quantum state. The left plot shows the variance-based uncertainty relation. Since the tuples $((\Delta L_x)_\rho ^2,(\Delta L_y)_\rho ^2),(\Delta L_z)_\rho ^2)$ obey relation~(\ref{eq:var1/2}), the fluctuations for pure states all lie in a plane (purple), whereas mixed states occupy a region of finite volume (red tetrahedron). The excluded area near the origin reflects the uncertainty relation~(\ref{eq:varsumreln3}). In the right plot, one of the variances has been replaced by the quantum Fisher information, which is generally bounded from above by the variance. The tuples $((\Delta L_x)_\rho ^2,(\Delta L_y)_\rho ^2,\frac{1}{4}F_Q[\rho, L_z])$ of pure and mixed states (green) lie in a single plane, as predicted by Eq.~(\ref{eq : qfivareq}).}\label{fig:s_half}
\end{figure}

\begin{figure}[tb]
\includegraphics[width=.49\textwidth]{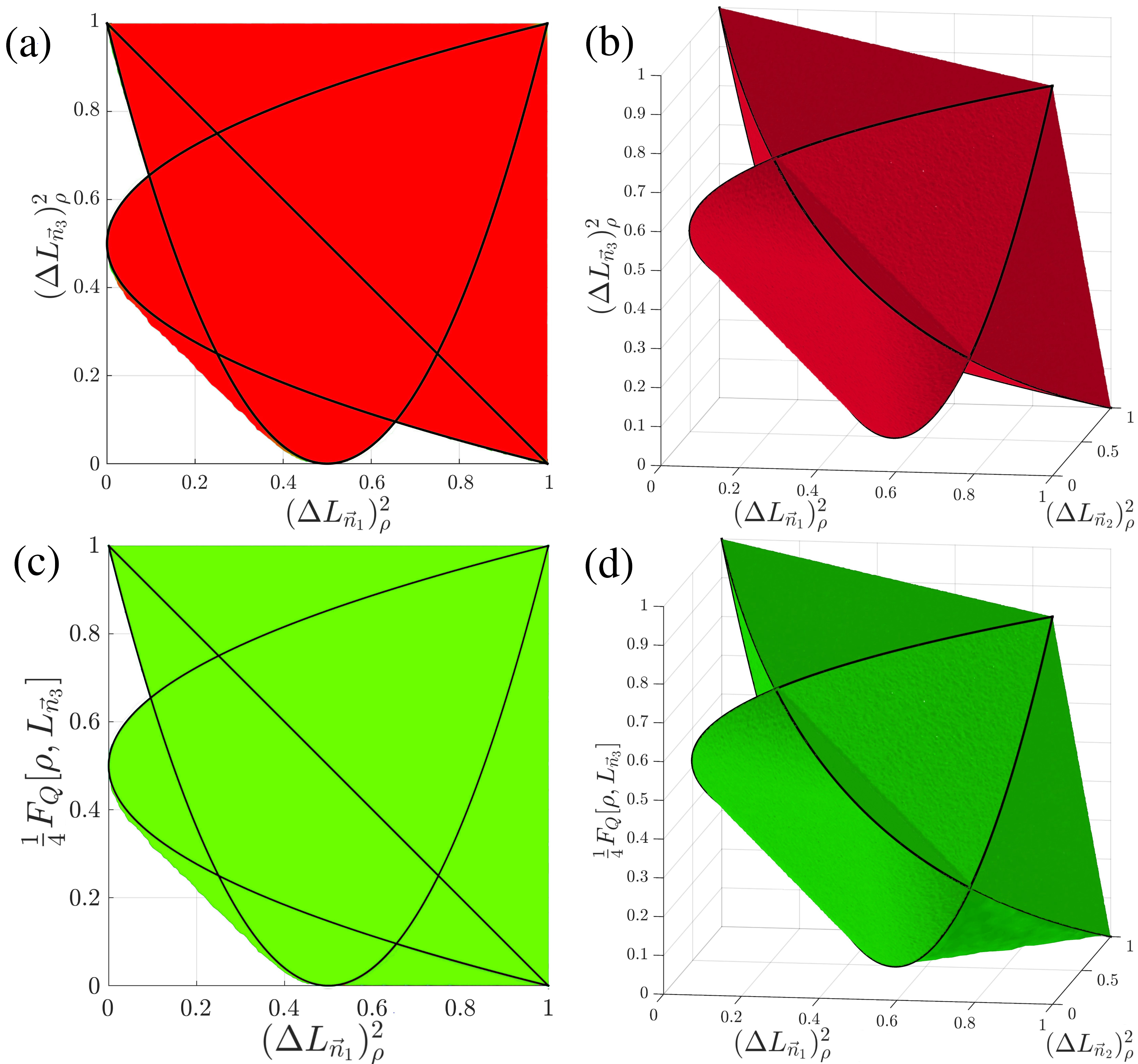}
\centering
\caption{Uncertainty diagrams for $s=1$. Each point in the colored region represents the fluctuations of a possible quantum state. (a) and (b) display $((\Delta L_x)_\rho ^2,(\Delta L_y)_\rho ^2, (\Delta L_z)_\rho ^2)$ tuples in two and three dimensions respectively, while (c) and (d) display $((\Delta L_x)_\rho ^2,(\Delta L_y)_\rho ^2,\frac{1}{4}F_Q[\rho, L_z]$ tuples in 2D and 3D respectively. The occupied regions in (b) and (d) are bounded below by the planes defined by Eqs.~(\ref{eq:varsumreln3}) and (\ref{eq:qfisum3}), respectively, which results in empty regions near the origin. Also, the green $((\Delta L_x)_\rho ^2,(\Delta L_y)_\rho ^2,\frac{1}{4}F_Q[\rho, L_z])$ tuples occupy a region previously unoccupied by the red $((\Delta L_x)_\rho ^2,(\Delta L_y)_\rho ^2,(\Delta L_z)_\rho ^2)$ tuples. This region can be occupied by only mixed states [in particular, mixed states such as~(\ref{eq:diagonal})].}
\label{fig:s_one}
\end{figure}

For $s = 1$, a feature that arises when a variance is replaced with its corresponding QFI is the occupation of a region that is not occupied if we consider only tuples of variances. This region can be observed in Fig.~\ref{fig:s_one} by comparing the green and red plots. For mixed states, the difference between $(\Delta A)_\rho ^2$ and $\frac{1}{4}F_Q [\rho, A]$ is bounded from above \cite{puritybound}:
\begin{equation}\label{eq:varQFIdiff}
    (\Delta A)_\rho ^2 - \frac{1}{4}F_Q [\rho, A] \leq \frac{1-\mathrm{Tr}(\rho^2)}{2}(\lambda_{\max}(A)-\lambda_{\min}(A))^2,
\end{equation}
where $\lambda_{\max}(A)$ and $\lambda_{\min}(A)$ are the largest and smallest eigenvalues of $A$, respectively. Thus, for mixed states, it is possible for a coordinate of a tuple to decrease when a corresponding variance in this direction is replaced by its QFI. To understand this phenomenon intuitively, it is instructive to consider an extreme example. Take, for instance, incoherent mixtures of $L_z$ eigenstates, i.e., density matrices of the form
\begin{align} \label{eq:diagonal}
\rho =  \begin{pmatrix} 
a & 0 & 0  \\
0 & b & 0\\
0 & 0 & 1-a-b\\
\end{pmatrix},
\end{align}
with real, non-negative parameters $a$ and $b$ such that $a+b\leq1$. It is straightforward to verify that $ (\Delta L_z)_\rho ^2=1 - b - (2 a + b - 1)^2 \geq 0$ and equality is reached only when $\rho$ is pure. However, we can interpret the state as being a classical mixture of pure states that have zero fluctuations in $L_z$. Hence, the nonzero variance is entirely due to the classical ignorance or the mixing process. Indeed, the concave-roof property of the variance ensures that the effect of classical mixing is large. In contrast, the QFI, by construction, minimizes these effects and consequently yields $F_Q[\rho,L_z]=0$ [recall also Eq.~(\ref{eq:iff0})]. This reflects the interpretation that the state is, in principle, prepared in a zero-variance state, we are just (classically) unsure which one.

Moreover, the variances of the states~(\ref{eq:diagonal}) along the other directions yield $(\Delta L_x)_\rho ^2 = (\Delta L_y)_\rho ^2 = \frac{1+b}{2}$.
Thus, states that are diagonal in $L_z$ are exactly the points that lie in the straight line between $(0.5,0.5,0)$ and $(1,1,0)$ of the uncertainty diagram  in Fig.~\ref{fig:s_one}(d).

\section{The qubit case}
For the special case of $s=1/2$, i.e., qubits, we prove stronger conditions by working with the Bloch representation. By expressing the QFI of a mixed state in terms of Bloch vectors, it can be interpreted as an analog of the moment of inertia of classical mechanics. This motivates us to prove analogs of the parallel and perpendicular axis theorems (\ref{eq:parallel_axis}) and (\ref{eq:perpendicular_axis}), as well as stronger sum uncertainty relations (\ref{eq:var1/2})-(\ref{eq:qfisum}). Furthermore, the Bloch representation, together with these results, provides a geometrical picture for decompositions $\{p_k, \ket{\Psi_k}\}$ of a mixed state $\rho$, which allows us to explicitly construct optimal decompositions that achieve the extrema (\ref{eq:varroof}) and (\ref{eq:mindecomp}).

\subsection{Uncertainty equalities}
For qubits, the angular momentum operator along the direction $\vec{n}\in\mathbb{R}^3$ can be written as $L_{\vec{n}}=\frac{1}{2}\vec{n}\cdot\vec{\sigma}$, where $\vec{\sigma} = (\sigma_x,\sigma_y,\sigma_z)^\intercal$ is a vector of Pauli matrices. Arbitrary quantum states are fully characterized by their Bloch vector $\vec{r}\in\mathbb{R}^3$ via $\rho(\vec{r}) = \frac{1}{2}(\mathbb{1}+\vec{r}\cdot\vec{\sigma})$. The QFI and variance of a spin-$1/2$ state $\rho$ can be determined analytically as a function of the Bloch vector $\vec{r}$. For a state $\rho(\theta)$ with Bloch vector $\vec{r}$ that depends in an arbitrary way on a parameter $\theta$, the QFI is given by \cite{bloch}
\begin{equation} \label{eq:qfibloch}
      F_Q[\rho(\theta)] =
  \begin{cases} 
   \lvert \partial_\theta \vec{r}\rvert^2 + \frac{(\partial_\theta\vec{r} \cdot \vec{r})^2}{{1-\lvert \vec{r}\rvert}^2} & \text{if } \lvert\vec{r}\rvert < 1, \\
   \lvert \partial_\theta \vec{r}\rvert^2      & \text{if } \lvert\vec{r}\rvert = 1.
  \end{cases}
\end{equation}

Assuming that the parameter $\theta$ is imprinted by a unitary evolution, generated by the operator $L_{\vec{n}}$, i.e., $i \partial_{\theta} \rho = [L_{\vec{n}},\rho]$, we obtain $\partial_{\theta} \vec{r} = \vec{n} \times \vec{r}$. Inserting this into Eq.~(\ref{eq:qfibloch}) yields
\begin{equation}\label{eq:QFI_cross}
F_Q[\rho,L_{\vec{n}}] = \lvert \vec{n} \times \vec{r} \rvert^2 = |\vec{r}|^2 - (\vec{n} \cdot \vec{r})^2,
\end{equation}
and the orthonormality of $\{\vec{n}_1,\vec{n}_2,\vec{n}_3\}$ allows us to further write
\begin{align}\label{eq:qfin}
F_Q[\rho,L_{\vec{n}_1}]=(\vec{n}_2\cdot\vec{r})^2+(\vec{n}_3\cdot\vec{r})^2.
\end{align}
On the other hand, the variance
\begin{align} \label{eq:varn}
(\Delta L_{\vec{n}})_{\rho}^2 & = \frac{1 - (\vec{n} \cdot \vec{r})^2}{4}
\end{align}
follows from $\langle L_{\vec{n}}\rangle_{\rho} = \frac{1}{2}\vec{n}\cdot\vec{r}$ and $\langle L_{\vec{n}}^2\rangle = \frac{1}{4}$, where we used $|\vec{n}|^2=1$. Finally, we recall that the purity of a qubit state is given by
\begin{equation} \label{eq:purity}
    \mathrm{Tr}\rho^2 = \frac{1}{2}(1+|\vec{r}|^2) = \frac{1}{2}\left[ 1+(\vec{n}_1\cdot\vec{r})^2+(\vec{n}_2\cdot\vec{r})^2+(\vec{n}_3\cdot\vec{r})^2 \right].
\end{equation}
Combining Eqs.~(\ref{eq:qfin}), (\ref{eq:varn}), and (\ref{eq:purity}), we can immediately prove the following result:
\begin{proposition}[Equalities for spin-$1/2$ systems]
Any quantum state $\rho$ of a spin-$1/2$ system satisfies
\begin{align} \label{eq:var1/2}
(\Delta L_{\vec{n}_1})_\rho ^2 + (\Delta L_{\vec{n}_2})_\rho ^2 + (\Delta L_{\vec{n}_3})_\rho ^2 &=  1-\frac{1}{2} \mathrm{Tr}\rho^2 .
\end{align}
\end{proposition}

We note that the difference between the variance and the QFI for the qubit operator $L_{\vec{n}}$~\cite{puritybound}
\begin{equation}\label{eq:diffqubit}
(\Delta L_{\vec{n}})_{\rho}^2 - \frac{1}{4}F_Q[\rho,L_{\vec{n}}] = \frac{1}{2}(1-\mathrm{Tr}\rho^2)
\end{equation}
can also be easily obtained with the above expressions. Equation~(\ref{eq:diffqubit}) then allows us to express the condition~(\ref{eq:var1/2}) as equalities involving the QFI, i.e.,
\begin{align}
\frac{1}{4}F_Q [\rho, L_{\vec{n}_1}] + (\Delta L_{\vec{n}_2})_\rho ^2 + (\Delta L_{\vec{n}_3})_\rho ^2 &=  \frac{1}{2} \label{eq : qfivareq}, \\
      \frac{1}{4}F_Q[\rho, L_{\vec{n}_1}] + \frac{1}{4}F_Q[\rho, L_{\vec{n}_2}] + (\Delta L_{\vec{n}_3})_\rho ^2 &=  \frac{1}{2}\mathrm{Tr}\rho^2, \\
      \frac{1}{4} F_Q [\rho, L_{\vec{n}_1}] + \frac{1}{4} F_Q[\rho, L_{\vec{n}_2}] + \frac{1}{4} F_Q[\rho, L_{\vec{n}_3}] &=   \mathrm{Tr}\rho^2 - \frac{1}{2} \label{eq:qfisum}.
\end{align}

\begin{figure}[tb]
\includegraphics[width=.45\textwidth]{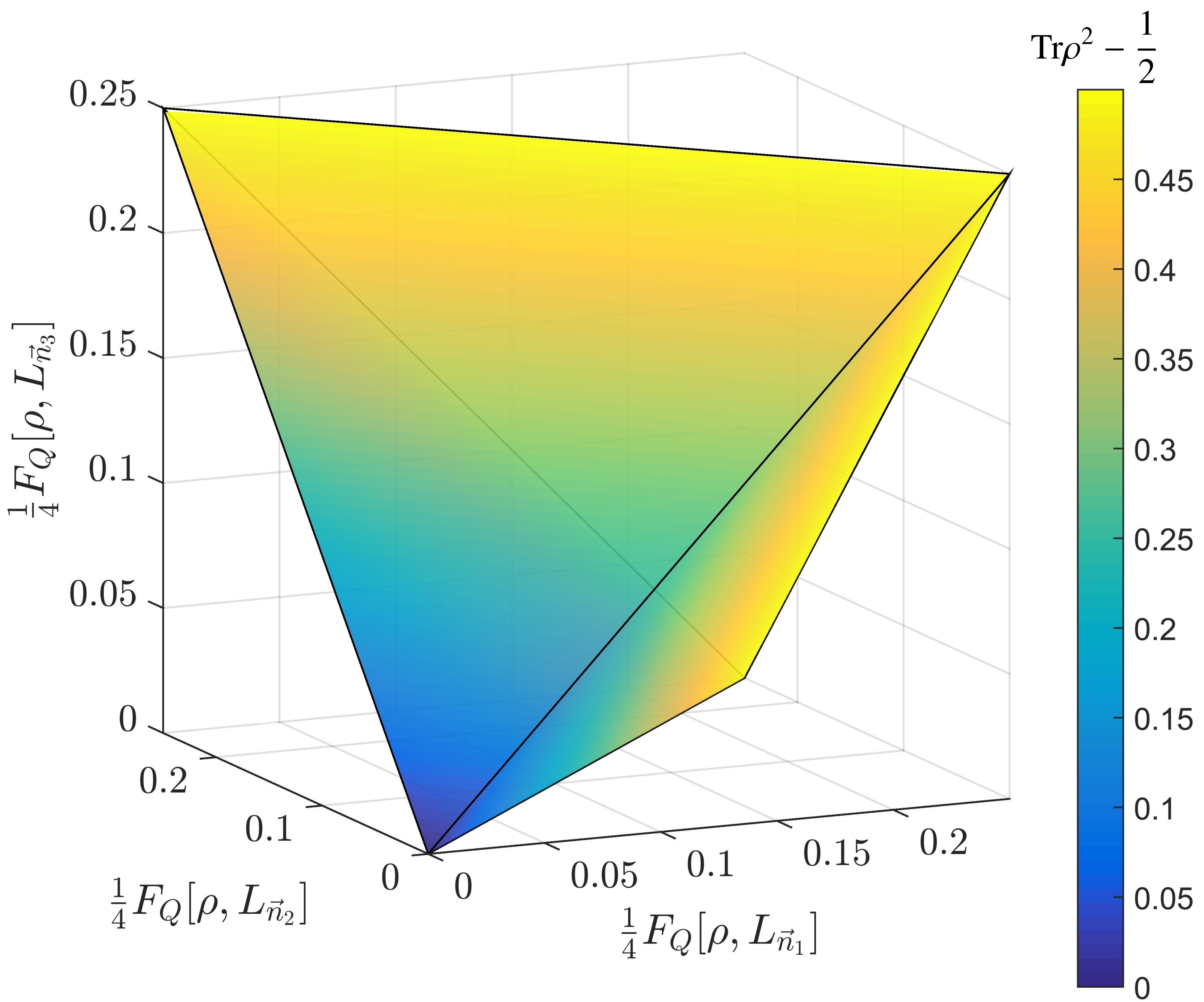}
\centering
\caption{Tetrahedron containing tuples $(\frac{1}{4}F_Q[\rho, L_{\vec{n}_1}],\frac{1}{4}F_Q[\rho, L_{\vec{n}_2}],\frac{1}{4}F_Q[\rho, L_{\vec{n}_3}])$ of quantum states of $s=1/2$. Colors indicate values of $\mathrm{Tr}\rho^2 - \frac{1}{2}$. The tetrahedron consists of planes of equal purity, aligned in the $(1,1,1)$ direction, as implied by Eq.~(\ref{eq:qfisum}).}
\label{purityplot}
\end{figure}

We observe that relation~(\ref{eq:qfisum3}) is saturated by all qubit states, as is revealed by Eq.~(\ref{eq : qfivareq}). We illustrate Eq.~(\ref{eq:qfisum}) in Fig.~\ref{purityplot}. We note also that the bounds
\begin{eqnarray}
1-|\vec{r}|^2 \leq\:  & 4(\Delta L_{\vec{n}})_{\rho}^2\:&\leq 1,\\
0 \leq\: & F_Q[\rho,L_{\vec{n}}] \:&\leq |\vec{r}|^2
\end{eqnarray}
hold for any $\rho$ and can be saturated by a suitable $\vec{n}$. Using them together with~(\ref{eq:var1/2}) and~(\ref{eq:diffqubit}) then yields upper and lower bounds on the fluctuations of qubit states:
\begin{align}
\frac{1}{2}-\frac{|\vec{r}|^2}{4} \leq && (\Delta L_{\vec{n}_1})_\rho ^2+ (\Delta L_{\vec{n}_2})_\rho ^2 \leq &&  \frac{1}{2} \label{eq:var_twosum} ,\\
\frac{1}{4} \leq && \frac{1}{4}F_Q[\rho, L_{\vec{n}_1}] + (\Delta L_{\vec{n}_2})_\rho ^2  \leq && \frac{1+|\vec{r}|^2}{4} ,\\
\frac{|\vec{r}|^2}{4} \leq && \frac{1}{4}F_Q[\rho, L_{\vec{n}_1}] + \frac{1}{4}F_Q[\rho, L_{\vec{n}_2}] \leq &&  \frac{|\vec{r}|^2}{2},\label{eq:QFI_twosum}
\end{align}
and we recall that $0\leq |\vec{r}|^2\leq 1$ can be expressed as a function of the purity using~(\ref{eq:purity}). Note that Eq.~(\ref{eq:var_twosum}) can be interpreted as a generalization of the state-independent bound for $s=1/2$, Eq.~(\ref{eq:varsumreln2}), that takes additional information about purity into account. The lower bound takes its smallest value for a pure state, when Eq.~(\ref{eq:varsumreln2}) is recovered.

\subsection{Moment of inertia analogy}
In the following, we establish an analogy between the QFI and the moment of inertia of a fictitious rigid body. Besides linking to the intuitive picture from classical mechanics, this analogy turns out to be useful for identifying optimal state decompositions. First, note that any decomposition $\{p_k, \ket{\Psi_k}\}$ of a mixed state $\rho$ into $n$ pure states can be geometrically represented in the Bloch sphere as the $n$ vertices of a polygon/polyhedron. Its vertices are given by the Bloch vectors $\vec{r}_k$ of the pure states $\ket{\Psi_k}$ on the surface of the sphere. For example, when there are only two elements, the decompositions
\begin{equation} \label{eq:chord_decomp}
    \rho(\vec{r}) = p\rho(\vec{r}_1) + (1-p)\rho(\vec{r}_2)
\end{equation}
represent chords with end points $\vec{r}_1$ and $\vec{r}_2$ that intersect $\vec{r} = p\vec{r}_1 + (1-p)\vec{r}_2$ (see Fig. \ref{bloch_1}). 
\begin{figure}[tb]
\includegraphics[width=.25\textwidth]{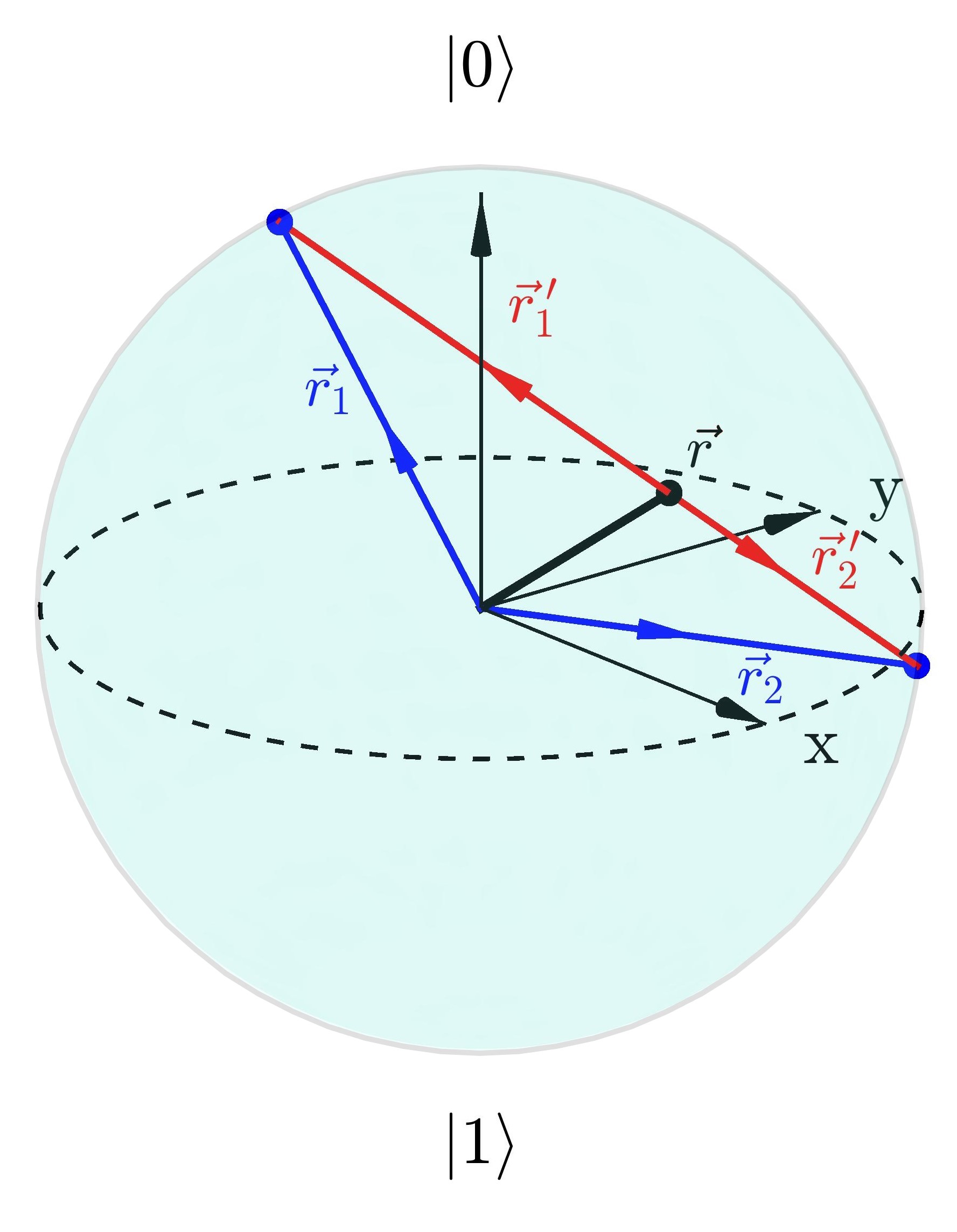}
\centering
\caption{
Bloch sphere showing the qubit $\rho(\vec{r})$ (black dot). The red chord represents a possible decomposition, $\rho(\vec{r}) = p\rho(\vec{r}_1) + (1-p)\rho(\vec{r}_2)$.}  
\label{bloch_1}
\end{figure}

Any such distribution $\{p_k, \ket{\Psi_k}\}$ of pure states can thus be interpreted as a rigid body $\{p_k, \vec{r}_k\}$ with point masses $p_k$, located at positions $\vec{r}_k$ that sum up to unity, with their center of mass at $\vec{r}$. When rotated around the axis $\vec{n}$, this body has a moment of inertia of
\begin{align}\label{eq:Ipknk}
I(\{p_k, \vec{r}_k\},\vec{n})=\sum_kp_k r_{k,\perp}^2,
\end{align}
where $r_{k,\perp}^2 = 1-(\vec{r}_k\cdot\vec{n})^2$ is the squared perpendicular distance between the point mass at $\vec{r}_k$ and the rotation axis $\vec{n}$. According to Eq.~(\ref{eq:QFI_cross}), $F_Q[\rho,L_{\vec{n}}]$ is the squared perpendicular distance between the axis $\vec{n}$ and the vector $\vec{r}$ in the Bloch sphere. One can thus interpret this as the moment of inertia of a unit mass located at $\vec{r}$. Moreover, this allows us to express Eq.~(\ref{eq:Ipknk}) as 
\begin{align}\label{eq:I_rigidbody}
I(\{p_k, \vec{r}_k\},\vec{n})=\sum_kp_k F_Q[|\Psi_k\rangle\langle\Psi_k|,L_{\vec{n}}].
\end{align} 

We are now in a position to state an analog of the parallel axis theorem. Recall that in classical mechanics, this theorem allows us to determine the moment of inertia for rotations about an axis that is shifted by the distance $l$ from the center of mass. For a rigid body of unit mass, it reads
\begin{align}\label{eq:pat}
I(\{p_k, \vec{r}_k\},\vec{n}) - l^2= I_{\mathrm{cm}}(\{p_k, \vec{r}_k\},\vec{n}),
\end{align}
where $I_{\mathrm{cm}}(\{p_k, \vec{r}_k\},\vec{n})$ is the moment of inertia for a rotation about an axis parallel to $\vec{n}$ that passes through the center of mass $\vec{r}$ of the rigid body $\{p_k, \vec{r}_k\}$ and $l^2$ is the squared perpendicular distance between $\vec{n}$ and $\vec{r}$. Invoking the above analogy with the QFI, we obtain $l^2=F_Q[\rho,L_{\vec{n}}]$. Moreover, the squared perpendicular distance between $\vec{r}_k$ and the axis of rotation parallel to $\vec{n}$ passing through the center of mass is given by $\lvert \vec{n} \times \vec{r}'_k \rvert^2$, where $\vec{r}'_k \equiv \vec{r}_k - \vec{r}$, as can be verified from elementary geometric considerations, leading to $I_{\mathrm{cm}}(\{p_k, \vec{r}_k\},\vec{n})=\sum_k p_k \lvert \vec{n} \times \vec{r}'_k \rvert^2$. We thus obtain the following result for pure-state decompositions for the QFI, in direct correspondence with~(\ref{eq:pat}):
\begin{proposition} [Parallel axis theorem]
For a mixed quantum state $\rho$ of a spin-$1/2$ system, its QFI, $F_Q [\rho, L_{\vec{n}}]$ is related to the average QFI of a decomposition $\{p_k, \ket{\Psi_k}\}$, $\sum_kp_k F_Q[|\Psi_k\rangle\langle\Psi_k|,L_{\vec{n}}]$, by
\begin{equation} \label{eq:parallel_axis}
\sum_kp_k F_Q[|\Psi_k\rangle\langle\Psi_k|,L_{\vec{n}}] - F_Q[\rho,L_{\vec{n}}] =  \sum_k p_k \lvert \vec{n} \times \vec{r}'_k \rvert^2,
\end{equation}
where we have introduced $\vec{r}'_k \equiv \vec{r}_k - \vec{r}$ as the separation vector connecting $\vec{r}$ to the Bloch vector $\vec{r}_k$ of $\ket{\Psi_k}$.
\end{proposition}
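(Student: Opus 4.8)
The plan is to reduce this operator identity to a purely geometric statement about vectors in the Bloch sphere, at which point it becomes the textbook proof of the parallel axis theorem. The two ingredients I would assemble first are the closed-form QFI expression~(\ref{eq:QFI_cross}), which gives $F_Q[\rho,L_{\vec{n}}]=\lvert\vec{n}\times\vec{r}\rvert^2$ for the mixed state and $F_Q[|\Psi_k\rangle\langle\Psi_k|,L_{\vec{n}}]=\lvert\vec{n}\times\vec{r}_k\rvert^2$ for each pure component, and the linearity of the Bloch representation $\rho(\vec{r})=\frac{1}{2}(\mathbb{1}+\vec{r}\cdot\vec{\sigma})$. Since $\vec{r}$ depends linearly on $\rho$ (componentwise $r_j=\mathrm{Tr}\{\rho\sigma_j\}$), the decomposition $\rho=\sum_k p_k|\Psi_k\rangle\langle\Psi_k|$ immediately yields the center-of-mass relation $\vec{r}=\sum_k p_k\vec{r}_k$, which is precisely what guarantees $\sum_k p_k\vec{r}'_k=0$.

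With these in hand, the identity to prove becomes $\sum_k p_k\lvert\vec{n}\times\vec{r}_k\rvert^2-\lvert\vec{n}\times\vec{r}\rvert^2=\sum_k p_k\lvert\vec{n}\times\vec{r}'_k\rvert^2$. I would substitute $\vec{r}_k=\vec{r}'_k+\vec{r}$ and use bilinearity of the cross product to write $\vec{n}\times\vec{r}_k=\vec{n}\times\vec{r}'_k+\vec{n}\times\vec{r}$. Expanding the square produces three contributions; upon summing against $p_k$, the cross term is $2\bigl(\vec{n}\times\sum_k p_k\vec{r}'_k\bigr)\cdot(\vec{n}\times\vec{r})$, which vanishes by the center-of-mass relation, while the constant term sums to $\lvert\vec{n}\times\vec{r}\rvert^2$ because $\sum_k p_k=1$. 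Rearranging then reproduces the claimed equality, with the right-hand side identified as the moment of inertia about the center-of-mass axis, exactly as in Eq.~(\ref{eq:pat}).

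The one point requiring a moment's care, rather than being a genuine obstacle, is justifying that Eq.~(\ref{eq:QFI_cross}) applies to $\rho$ even when $\lvert\vec{r}\rvert<1$: the piecewise formula~(\ref{eq:qfibloch}) carries an extra term $(\partial_\theta\vec{r}\cdot\vec{r})^2/(1-\lvert\vec{r}\rvert^2)$ in the mixed-state branch, but with $\partial_\theta\vec{r}=\vec{n}\times\vec{r}$ this numerator equals $((\vec{n}\times\vec{r})\cdot\vec{r})^2=0$ since $\vec{n}\times\vec{r}$ is orthogonal to $\vec{r}$, so the single expression $\lvert\vec{n}\times\vec{r}\rvert^2$ holds uniformly for pure and mixed states alike. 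Beyond verifying this, the argument is simply the classical parallel axis computation transcribed into the QFI language, and I anticipate no further subtlety.
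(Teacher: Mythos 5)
Your proposal is correct and follows exactly the route the paper intends: its proof is the one-line remark that the result follows ``straightforwardly with Eq.~(\ref{eq:QFI_cross}), using the fact that $\sum_k p_k \vec{r}'_k = 0$,'' which is precisely the expansion you carry out. Your extra check that the $(\partial_\theta\vec{r}\cdot\vec{r})^2/(1-|\vec{r}|^2)$ term vanishes is a welcome detail but is already built into the paper's derivation of Eq.~(\ref{eq:QFI_cross}).
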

\begin{proof}
The result is proven straightforwardly with Eq.~(\ref{eq:QFI_cross}), using the fact that $\sum_k p_k \vec{r}'_k = 0$ by construction.
\end{proof}

We can also rewrite Eq.~(\ref{eq:parallel_axis}) in terms of variances with Eq.~(\ref{eq:diffqubit}) to yield
\begin{align} \label{eq:parallel_axis_var}
(\Delta L_{\vec{n}})_{\rho}^2 -  \sum_{k} p_k (\Delta L_{\vec{n}})_{\Psi_k} ^2 &= \frac{1}{4}[(1-|\vec{r}|^2)-\sum_k p_k \lvert \vec{n} \times \vec{r}'_k \rvert^2] \notag\\
& = \frac{1}{4}\sum_k p_k(\vec{n} \cdot \vec{r}'_k)^2.
\end{align}
For any pure-state decomposition $\{p_k,|\Psi_k\rangle\}$ of $\rho$, the central inequality in~(\ref{eq:QFI<4Var}) is saturated. From Eqs.~(\ref{eq:parallel_axis}) and (\ref{eq:parallel_axis_var}) we can then quantify the deviation between the average pure-state variance and the lower and upper bounds in~(\ref{eq:QFI<4Var}), i.e., the QFI and variance of the mixed state, respectively.

Similarly, we can formulate a result analogous to the classical perpendicular axis theorem. If all $\vec{r}_k$ lie in a plane, spanned, e.g., by $\vec{n}_2$-$\vec{n}_3$, the perpendicular axis theorem states that:
\begin{equation} \label{eq:perpendicular_axis_classical}
I(\{p_k, \vec{r}_k\},\vec{n}_1) = I(\{p_k, \vec{r}_k\},\vec{n}_2) + I(\{p_k, \vec{r}_k\},\vec{n}_3),
\end{equation} 
where we assumed that $\vec{n}_1$ is an axis perpendicular to the plane. In terms of the QFI, we have the following:
\begin{proposition} [Perpendicular axis theorem]
For a mixed quantum state $\rho$ of a spin-$1/2$ system, its QFI in orthogonal directions $\{\vec{n}_1, \vec{n}_2, \vec{n}_3\}$ are related by
\begin{equation} \label{eq:perpendicular_axis}
F_Q [\rho, L_{\vec{n}_1}] = F_Q [\rho, L_{\vec{n}_2}] + F_Q [\rho, L_{\vec{n}_3}] - 8\langle L_{\vec{n}_1} \rangle_{\rho}^2.
\end{equation}
Furthermore, if the Bloch vectors of all $|\Psi_k\rangle$ of a decomposition $\{p_k,|\Psi_k\rangle\}$ of $\rho$ lie in a plane perpendicular to $\vec{n}_1$, then Eq.~(\ref{eq:perpendicular_axis_classical}) holds for the average Fisher information defined in Eq.~(\ref{eq:I_rigidbody}).
\end{proposition}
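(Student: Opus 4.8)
The plan is to prove the two claims separately, since the first is an exact identity for the mixed-state QFI while the second concerns the average (rigid-body) Fisher information of a planar decomposition. For the identity~(\ref{eq:perpendicular_axis}), I would start from Eq.~(\ref{eq:qfin}), which expresses the QFI along a basis direction as the sum of the squared projections of $\vec{r}$ onto the two orthogonal directions. Writing $F_Q[\rho, L_{\vec{n}_2}]$ and $F_Q[\rho, L_{\vec{n}_3}]$ analogously and adding them, the orthonormality of $\{\vec{n}_1,\vec{n}_2,\vec{n}_3\}$ makes the $(\vec{n}_2\cdot\vec{r})^2$ and $(\vec{n}_3\cdot\vec{r})^2$ contributions reproduce exactly $F_Q[\rho, L_{\vec{n}_1}]$, leaving a residual $2(\vec{n}_1\cdot\vec{r})^2$. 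Since $\langle L_{\vec{n}_1}\rangle_\rho = \frac{1}{2}\vec{n}_1\cdot\vec{r}$, this residual equals $8\langle L_{\vec{n}_1}\rangle_\rho^2$, and rearranging gives the claimed relation. This is a short algebraic manipulation with no real obstacle.

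For the rigid-body statement, I would specialize the single-mass QFI~(\ref{eq:QFI_cross}) to the pure states $|\Psi_k\rangle$, for which $|\vec{r}_k|=1$, so that $F_Q[|\Psi_k\rangle\langle\Psi_k|, L_{\vec{n}}] = 1 - (\vec{n}\cdot\vec{r}_k)^2$ coincides with the squared perpendicular distance $r_{k,\perp}^2$ used in Eq.~(\ref{eq:Ipknk}). The planar hypothesis means $\vec{n}_1\cdot\vec{r}_k=0$ for every $k$; combined with the completeness relation $(\vec{n}_1\cdot\vec{r}_k)^2 + (\vec{n}_2\cdot\vec{r}_k)^2 + (\vec{n}_3\cdot\vec{r}_k)^2 = |\vec{r}_k|^2 = 1$, this yields the per-element identity $F_Q[|\Psi_k\rangle\langle\Psi_k|, L_{\vec{n}_1}] = F_Q[|\Psi_k\rangle\langle\Psi_k|, L_{\vec{n}_2}] + F_Q[|\Psi_k\rangle\langle\Psi_k|, L_{\vec{n}_3}]$ for each $k$. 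Taking the convex average $\sum_k p_k(\cdot)$ and invoking the definition~(\ref{eq:I_rigidbody}) then produces Eq.~(\ref{eq:perpendicular_axis_classical}) directly.

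The step I expect to warrant the most care is not computational but conceptual: explaining why the mixed-state identity~(\ref{eq:perpendicular_axis}) carries the correction $-8\langle L_{\vec{n}_1}\rangle_\rho^2$ whereas the planar rigid-body version does not. The reconciliation is the parallel-axis theorem~(\ref{eq:parallel_axis}): the mixed-state QFI is the moment of inertia of a single unit mass located at the center of mass $\vec{r}$, which in general has a nonzero component $\vec{n}_1\cdot\vec{r}$ out of the plane, so its out-of-plane offset produces the correction term. The average Fisher information, by contrast, is the moment of inertia of the entire mass distribution, which by hypothesis lies wholly in the plane perpendicular to $\vec{n}_1$, so the classical perpendicular axis theorem holds exactly. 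I would state this distinction explicitly so that the two parts of the proposition read as complementary rather than contradictory.
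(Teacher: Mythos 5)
Your proposal is correct and follows essentially the same route as the paper: the identity~(\ref{eq:perpendicular_axis}) is obtained by the same algebraic manipulation of Eq.~(\ref{eq:qfin}) using orthonormality and $\langle L_{\vec{n}_1}\rangle_\rho=\frac{1}{2}\vec{n}_1\cdot\vec{r}$, and the planar statement follows, as in the paper, from the per-element identity for each $|\Psi_k\rangle$ (equivalent to setting $\langle L_{\vec{n}_1}\rangle_{\Psi_k}=0$) followed by the convex average via Eq.~(\ref{eq:I_rigidbody}). The closing remark relating the $-8\langle L_{\vec{n}_1}\rangle_\rho^2$ correction to the parallel-axis picture is a sound addition but does not change the argument.
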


\begin{proof}
To show Eq.~(\ref{eq:perpendicular_axis}), note that from Eq.~(\ref{eq:qfin}), we have
\begin{align} 
    F_Q[\rho,L_{\vec{n}_1}] &= (\vec{n}_2\cdot\vec{r})^2+(\vec{n}_3\cdot\vec{r})^2
  \notag \\ & = \lvert \vec{n}_2 \times \vec{r} \rvert^2 + \lvert \vec{n}_3 \times \vec{r} \rvert^2 - 2(\vec{n}_1\cdot\vec{r})^2
   \notag\\ & = F_Q [\rho, L_{\vec{n}_2}] + F_Q [\rho, L_{\vec{n}_3}] - 8\langle L_{\vec{n}_1} \rangle_{\rho}^2.
\end{align}
If the Bloch vectors of all $|\Psi_k\rangle$ lie in the plane perpendicular to $\vec{n}_1$, we have $\langle L_{\vec{n}_1} \rangle_{\Psi_k} = 0$ and
\begin{equation}
F_Q [|\Psi_k\rangle\langle\Psi_k|, L_{\vec{n}_1}] = F_Q [|\Psi_k\rangle\langle\Psi_k|, L_{\vec{n}_2}] + F_Q [|\Psi_k\rangle\langle\Psi_k|, L_{\vec{n}_3}].
\end{equation}
With Eq.~(\ref{eq:I_rigidbody}) this leads to Eq.~(\ref{eq:perpendicular_axis_classical}). 
\end{proof}

\subsection{Construction of optimal decompositions and the properties of eigendecompositions}
We now discuss optimal decompositions, i.e., decompositions $\{p_k,|\Psi_k\rangle\}$ that achieve the minimum~(\ref{eq:mindecomp}) or maximum~(\ref{eq:varroof}) average variance. The parallel axis theorem~(\ref{eq:parallel_axis}) allows us to see that there is always a unique minimal decomposition and a set of maximal decompositions that are orthogonal to one another (the precise sense will be discussed below).

\begin{figure*}[tb]
  \subfloat[$\rho_1 = \frac{1}{2} \mathbb{1}$, with minimal decomposition $\rho_1 = \frac{1}{2}(\rho(\vec{z}) + \rho(-\vec{z}))$.]{\label{subfigB2}%
      \includegraphics[width=.3\textwidth]{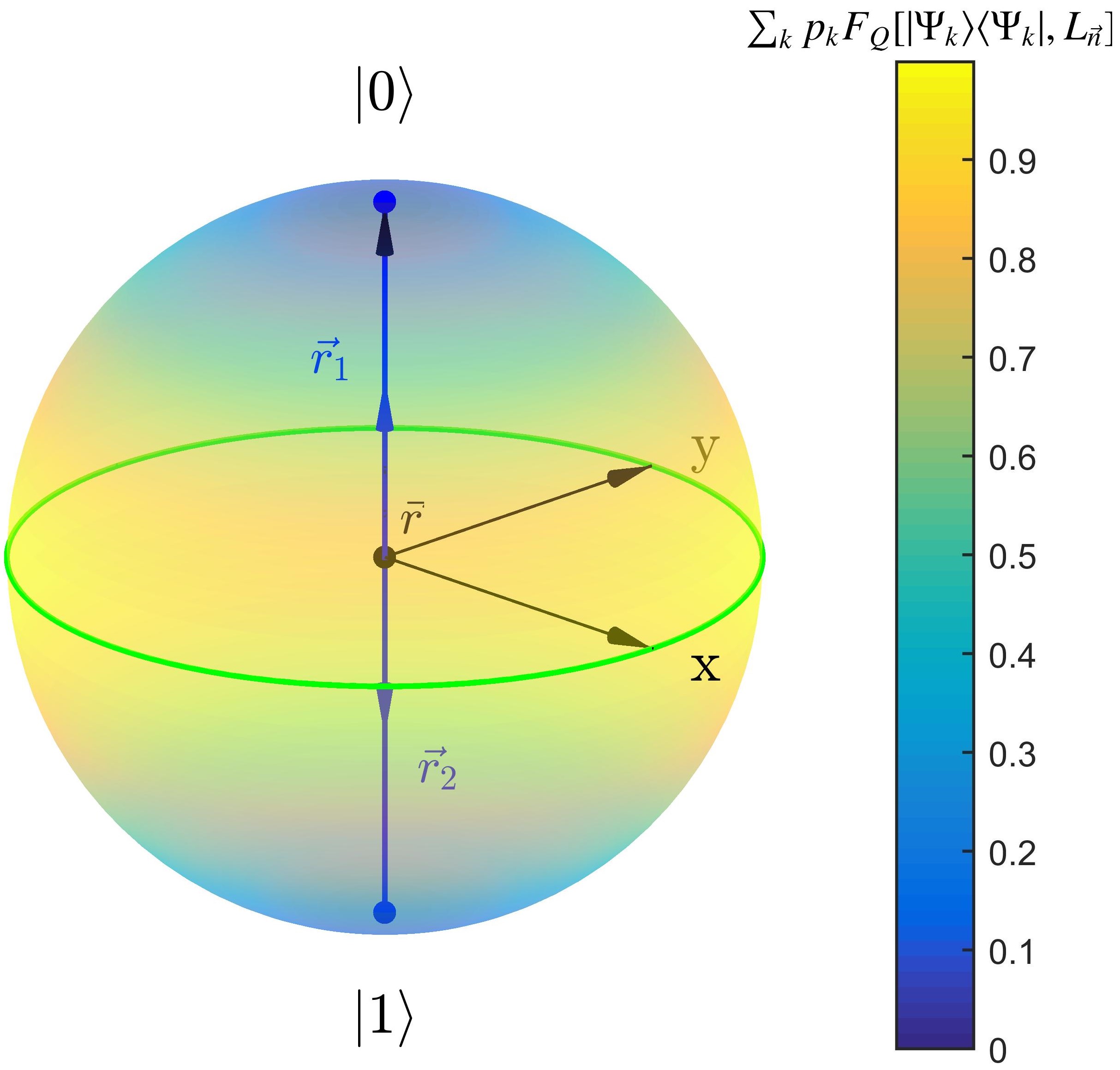}
      }
  \subfloat[$\rho_2(\vec{r}_2 = \frac{1}{2}\vec{x})$, with minimal decomposition $\rho_2 = \frac{1}{2}\rho(\frac{1}{2}\vec{x} + \sqrt{\frac{3}{4}}\vec{z}) + \frac{1}{2}\rho(\frac{1}{2}\vec{x} - \sqrt{\frac{3}{4}}\vec{z})$.]{\label{subfigB3}%
      \includegraphics[width=.3\textwidth]{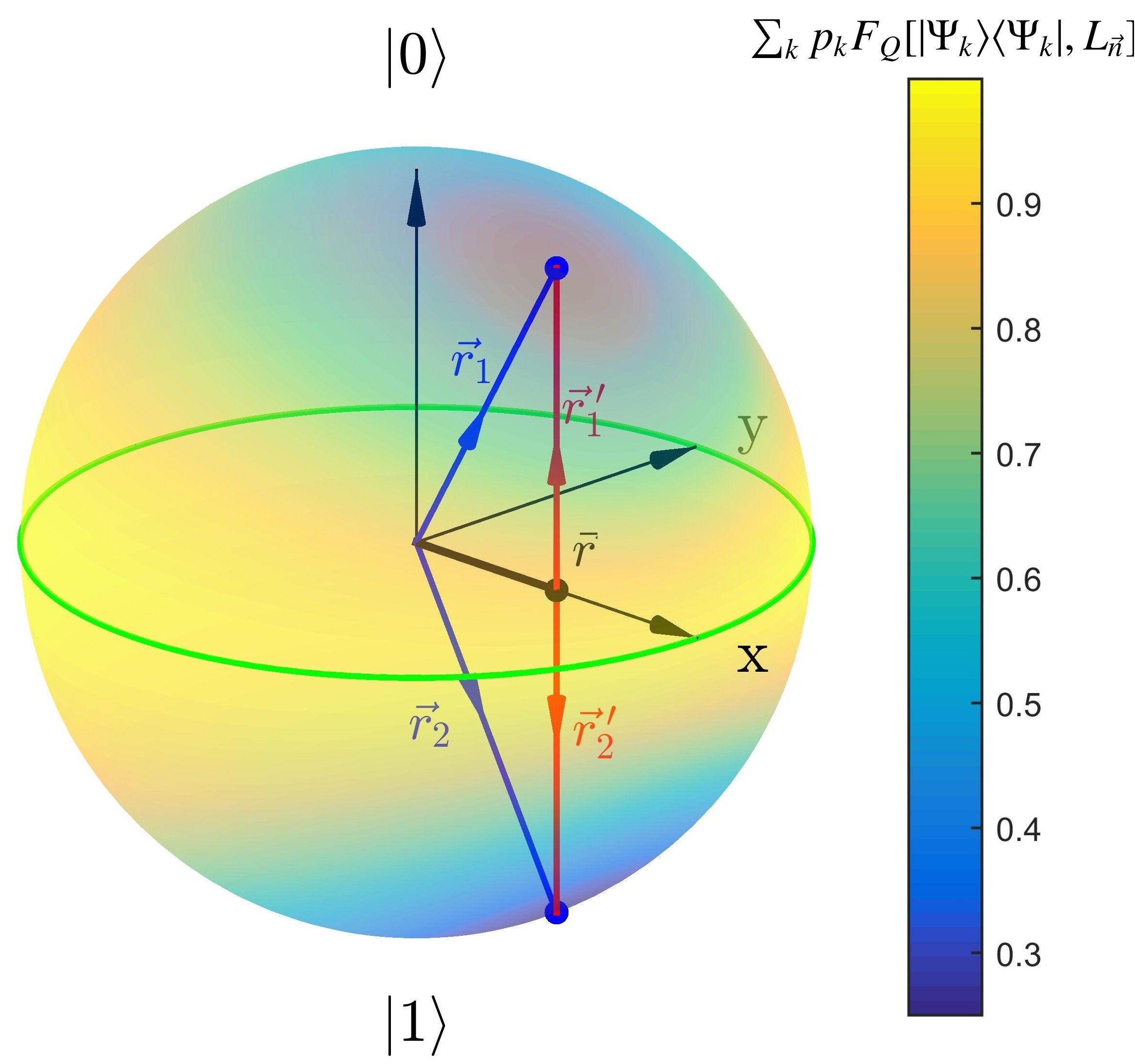}
      }
\caption{Bloch spheres showing qubits $\rho(\vec{r})$ (black dots) and their minimal decompositions for the operator $L_z$. The red chords in \ref{subfigB2} (overlapping with the blue chord) and \ref{subfigB3} represent unique minimal decompositions of $\rho_1$ and $\rho_2$, respectively, where $\vec{r}'_1$ for both states is parallel to $\vec{n} = \vec{z}$. On the other hand, the maximal decompositions are represented by any chord that intersects $\vec{r}$ and any two points of the green circle, which has central axes parallel to $\vec{n}$. We also show the values of $\sum_kp_k F_Q[|\Psi_k\rangle\langle\Psi_k|,L_{\vec{n}}]$ of each chord as a heatmap on the surface of the Bloch sphere. The value of a point on the surface is the value of $\sum_kp_k F_Q[|\Psi_k\rangle\langle\Psi_k|,L_{\vec{n}}]$ due to the chord that intersects that point and $\vec{r}$. The minimal (maximal) decompositions can then be read off directly from the heatmap; they are the chords connecting the darkest (lightest) regions with $\vec{r}$.}
\end{figure*}

To construct the minimal decomposition, note that from Eq.~(\ref{eq:parallel_axis}), the average variance $\sum_kp_k F_Q[|\Psi_k\rangle\langle\Psi_k|,L_{\vec{n}}]$ reaches its minimum and hence achieves $F_Q[\rho,L_{\vec{n}}]$ when the term $\sum_k p_k \lvert \vec{n} \times \vec{r}'_k \rvert^2$ is zero. This can be uniquely achieved by choosing the two-element decomposition such that the separation vectors $\vec{r}'_k \equiv \vec{r}_k - \vec{r}$ are both parallel to $\vec{n}$, representing the chord parallel to $\vec{n}$. Explicitly, all two-element decompositions (\ref{eq:chord_decomp}) satisfy
\begin{align}
\vec{r}'_1 &= -\vec{r}'_2, \label{two-element1}\\
p &= \frac{|\vec{r}'_2|}{|\vec{r}'_1|+|\vec{r}'_2|},\\
|\vec{r}'_1| &= \sqrt{(\vec{r}'_1 \cdot \vec{r})^2 - |\vec{r}|^2 + 1} - (\vec{r}'_1 \cdot \vec{r}),\\
|\vec{r}'_2| &= \sqrt{(\vec{r}'_2 \cdot \vec{r})^2 - |\vec{r}|^2 + 1} + (\vec{r}'_2 \cdot \vec{r}) \label{two-element2}.
\end{align}
With the choice $\vec{r}'_1 = -\vec{r}'_2=\vec{n}$, the sum $\sum_k p_k \lvert \vec{n} \times \vec{r}'_k \rvert^2$ therefore vanishes. Any decomposition with more than two elements is nonoptimal since it will necessarily lead to nonzero terms in the sum.

To construct the maximal decompositions, from Eq.~(\ref{eq:parallel_axis_var}) we instead minimize $\sum_k p_k(\vec{n} \cdot \vec{r}'_k)^2$, which can be achieved by choosing the $\vec{r}'_k$ perpendicular to $\vec{n}$. This choice is no longer unique.

As examples, consider the operator $L_z$, with states $\rho_1(\vec{r}_1 = 0) = \frac{1}{2} \mathbb{1}$ and $\rho_2(\vec{r}_2 = \frac{1}{2}\vec{x})$ (Fig. \ref{subfigB2}, \ref{subfigB3}). The unique chord that reaches $F_Q[\rho,L_z]$ is parallel to $\vec{z}$, so $\rho_1 = \frac{1}{2}(\rho(\vec{z}) + \rho(-\vec{z}))$ and $\rho_2 = \frac{1}{2}\rho(\frac{1}{2}\vec{x} + \sqrt{\frac{3}{4}}\vec{z}) + \frac{1}{2}\rho(\frac{1}{2}\vec{x} - \sqrt{\frac{3}{4}}\vec{z})$ are the minimal decompositions. On the other hand, among two-element decompositions, the chords that reach $(\Delta L_{z})_{\rho}^2$ are not unique. Any chord that intersects both $\vec{r}$ and two points of the circle with an axis parallel to $\vec{n}$ reaches the variance.

\begin{figure}[tb]
  \includegraphics[width=.45\textwidth]{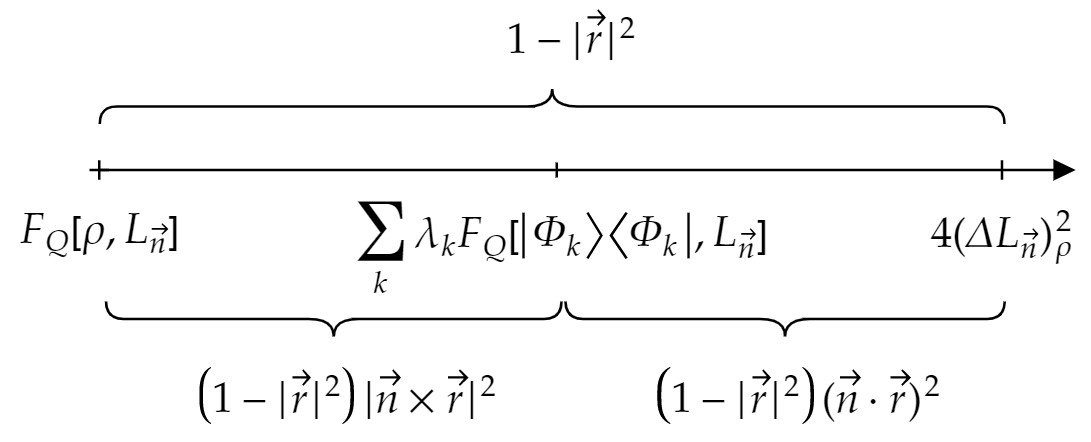}
  \centering
  \caption{The positive real number line showing the relative positions of the QFI, average variance of the eigendecomposition, and variance of a mixed qubit state $\rho$. The difference between the variance and QFI of $\rho$ is given by Eq.~(\ref{eq:diffqubit}) as $2(1-\mathrm{Tr}\rho^2)=1-|\vec{r}|^2$, and the average QFI of any decomposition of $\rho$ is located within this range. In particular, the average variance of the eigendecomposition of $\rho$ separates this range with the ratio $\lvert \vec{n} \times \vec{r} \rvert^2/(\vec{n} \cdot \vec{r})^2$.}  
  \label{numberline}
  \end{figure}

The eigendecomposition of any qubit $\rho$ besides the maximally mixed state is a unique two-element decomposition. The following result shows how its average QFI is related to its optimal decompositions.  

\begin{corollary}
Let $\rho$ be a nondegenerate mixed qubit state and $\{\lambda_k, \ket{\Phi_k}\}$ be its unique eigendecomposition. The average QFI of the eigendecomposition of $\rho$, $\sum_k \lambda_k F_Q[|\Phi_k\rangle\langle\Phi_k|,L_{\vec{n}}]$, is related to its QFI and variance by
\begin{align}
4(\Delta L_{\vec{n}})_{\rho}^2 -  4\sum_{k} \lambda_k (\Delta L_{\vec{n}})_{\Phi_k} ^2 &= (1-|\vec{r}|^2)(\vec{n} \cdot \vec{r})^2\\
\sum_k \lambda_k F_Q[|\Phi_k\rangle\langle\Phi_k|,L_{\vec{n}}] - F_Q[\rho,L_{\vec{n}}] &=  (1-|\vec{r}|^2)\lvert \vec{n} \times \vec{r} \rvert^2
\end{align}
\end{corollary}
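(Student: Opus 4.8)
The plan is to specialize the parallel axis theorem~(\ref{eq:parallel_axis}) and its variance counterpart~(\ref{eq:parallel_axis_var}) to the two-element decomposition furnished by the eigenbasis of $\rho$. Both identities express the gap between the mixed-state QFI (or variance) and the decomposition-averaged pure-state QFI (or variance) entirely through the separation vectors $\vec{r}'_k=\vec{r}_k-\vec{r}$, so the whole task reduces to writing down these vectors for the eigendecomposition and evaluating two short weighted sums.

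First I would identify the eigendecomposition in Bloch language. For a nondegenerate qubit we have $|\vec{r}|>0$, and the two eigenstates of $\rho(\vec{r})=\frac{1}{2}(\mathbb{1}+\vec{r}\cdot\vec{\sigma})$ are the pure states with Bloch vectors $\pm\vec{r}/|\vec{r}|$ and eigenvalues $\lambda_\pm=\frac{1}{2}(1\pm|\vec{r}|)$. This is the unique two-element decomposition whose chord runs along the radial direction, so both separation vectors $\vec{r}'_\pm=\pm\vec{r}/|\vec{r}|-\vec{r}=(\pm1-|\vec{r}|)\,\vec{r}/|\vec{r}|$ are collinear with $\vec{r}$; in general they are not collinear with $\vec{n}$, which is exactly why this chord is neither the minimal nor a maximal decomposition and why its average QFI falls strictly between the QFI and the variance of $\rho$ (cf.~Fig.~\ref{numberline}).

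Next I would substitute these vectors into the two theorems. Since each $\vec{r}'_\pm$ is a scalar multiple of the unit vector $\vec{r}/|\vec{r}|$, the geometric factors collapse to $\lvert\vec{n}\times\vec{r}'_\pm\rvert^2=(\pm1-|\vec{r}|)^2\,\lvert\vec{n}\times(\vec{r}/|\vec{r}|)\rvert^2$ and $(\vec{n}\cdot\vec{r}'_\pm)^2=(\pm1-|\vec{r}|)^2\,(\vec{n}\cdot\vec{r}/|\vec{r}|)^2$, and the only nontrivial step is the weighted sum of the common prefactor,
\[
\sum_{\pm}\lambda_\pm(\pm1-|\vec{r}|)^2=\frac{1}{2}(1+|\vec{r}|)(1-|\vec{r}|)^2+\frac{1}{2}(1-|\vec{r}|)(1+|\vec{r}|)^2=1-|\vec{r}|^2.
\]
Inserting this into~(\ref{eq:parallel_axis}) yields the QFI identity and into~(\ref{eq:parallel_axis_var}), after multiplying by $4$, the variance identity. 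Equivalently, both follow directly from the closed forms~(\ref{eq:QFI_cross}) and~(\ref{eq:varn}): because the two eigenstates are collinear with $\vec{r}$ they share a common pure-state QFI $\lvert\vec{n}\times(\vec{r}/|\vec{r}|)\rvert^2$ and variance, the $\lambda_\pm$-average trivializes, and subtracting the mixed-state values reproduces the same right-hand sides.

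The algebra is routine; the single point demanding care is the normalization bookkeeping. Because the eigenstates are pure, their Bloch vectors lie on the unit sphere and are $\pm\vec{r}/|\vec{r}|$ rather than $\pm\vec{r}$, so the geometric factors carry an explicit $1/|\vec{r}|^2$ relative to $\lvert\vec{n}\times\vec{r}\rvert^2$ and $(\vec{n}\cdot\vec{r})^2$. Tracking this factor is where the proof can go wrong and is what pins down the precise form of the right-hand sides; the resulting expressions are consistent with the ratio $\lvert\vec{n}\times\vec{r}\rvert^2/(\vec{n}\cdot\vec{r})^2$ quoted for Fig.~\ref{numberline}.
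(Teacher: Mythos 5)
Your route is the same as the paper's: the printed proof is a one-line appeal to the parallel-axis identities~(\ref{eq:parallel_axis}) and~(\ref{eq:parallel_axis_var}) evaluated on the radial (eigenbasis) chord, and you simply carry out that evaluation explicitly. Your algebra is correct: with $\vec{r}'_\pm=(\pm1-|\vec{r}|)\,\vec{r}/|\vec{r}|$ and $\lambda_\pm=\tfrac{1}{2}(1\pm|\vec{r}|)$ one indeed gets $\sum_\pm\lambda_\pm(\pm1-|\vec{r}|)^2=1-|\vec{r}|^2$, so the two gaps come out as $(1-|\vec{r}|^2)\lvert\vec{n}\times\vec{r}\rvert^2/|\vec{r}|^2$ and $(1-|\vec{r}|^2)(\vec{n}\cdot\vec{r})^2/|\vec{r}|^2$. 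The one thing you should state outright rather than hedge: this means the right-hand sides of the corollary as printed are off by exactly the factor $1/|\vec{r}|^2$ you flag --- they are correct only if $\vec{r}$ there is read as the unit vector $\vec{r}/|\vec{r}|$. A quick check with $\vec{r}=\tfrac{1}{2}\vec{z}$ and $\vec{n}=\vec{z}$ gives a variance gap of $\tfrac{3}{4}$ versus the printed $(1-\tfrac{1}{4})\cdot\tfrac{1}{4}=\tfrac{3}{16}$. So your normalization bookkeeping is not merely ``where the proof can go wrong''; it is right, and it corrects the statement. The ratio $\lvert\vec{n}\times\vec{r}\rvert^2/(\vec{n}\cdot\vec{r})^2$ quoted in the caption of Fig.~\ref{numberline} is unaffected, since the factor cancels.
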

\begin{proof}
The eigendecomposition of $\rho$ is the two-element decomposition where both $\vec{n}_i$ and $\vec{n}'_i$ are parallel to $\vec{r}$. Applying Eqs.~(\ref{two-element1})--(\ref{two-element2}) to explicitly determine the differences~(\ref{eq:parallel_axis}) and~(\ref{eq:parallel_axis_var}) yields the desired results. 
\end{proof}

The relation between the eigendecomposition and the optimal decompositions is summarized in Fig.~\ref{numberline}. Additionally, one can also choose $\vec{r}$ or $\vec{n}$ such that the eigendecomposition of $\rho$ corresponds to its minimal or maximal decomposition, namely, by requiring that $\lvert \vec{n} \times \vec{r} \rvert = 0$ or $(\vec{n} \cdot \vec{r}) = 0$, respectively.

Finally, we can relate the minimal decomposition in one direction with the maximal decomposition in directions that are perpendicular to it due to the following result.
\begin{corollary}
Let $\rho$ be a mixed qubit state. A decomposition $\{p_k, \ket{\Psi_k}\}$ of $\rho$ minimizes the convex sum of variances in one direction, $\frac{1}{4}F_Q[\rho,\Delta L_{\vec{n}_1}]=\sum_{k} p_k (\Delta L_{\vec{n}_1})_{\Psi_k} ^2$, if and only if it maximizes the same sum in both the other two orthogonal directions, $(\Delta L_{\vec{n}_2})_{\rho}^2=\sum_{k} p_k (\Delta L_{\vec{n}_2})_{\Psi_k} ^2$ and $(\Delta L_{\vec{n}_3})_{\rho}^2=\sum_{k} p_k (\Delta L_{\vec{n}_3})_{\Psi_k} ^2$.

\end{corollary}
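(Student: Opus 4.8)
The plan is to translate both the minimization condition and the two maximization conditions into geometric statements about the separation vectors $\vec{r}'_k = \vec{r}_k - \vec{r}$, using the two forms of the parallel axis theorem already established, and then to exploit the orthonormality of $\{\vec{n}_1, \vec{n}_2, \vec{n}_3\}$ to identify these geometric statements with one another.

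First I would characterize the minimization condition. Since pure states saturate $F_Q[|\Psi_k\rangle\langle\Psi_k|, L_{\vec{n}_1}] = 4(\Delta L_{\vec{n}_1})_{\Psi_k}^2$, the QFI parallel axis theorem~(\ref{eq:parallel_axis}) gives $4\sum_k p_k (\Delta L_{\vec{n}_1})_{\Psi_k}^2 - F_Q[\rho, L_{\vec{n}_1}] = \sum_k p_k |\vec{n}_1 \times \vec{r}'_k|^2$. Hence the minimization condition $\frac{1}{4}F_Q[\rho, L_{\vec{n}_1}] = \sum_k p_k (\Delta L_{\vec{n}_1})_{\Psi_k}^2$ holds if and only if $\sum_k p_k |\vec{n}_1 \times \vec{r}'_k|^2 = 0$. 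Being a sum of non-negative terms with positive weights $p_k$ (discarding any vanishing weights), it vanishes exactly when $\vec{n}_1 \times \vec{r}'_k = 0$ for every $k$, i.e.\ when each $\vec{r}'_k$ is parallel to $\vec{n}_1$.

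Next I would do the same for the two maximization conditions using the variance form~(\ref{eq:parallel_axis_var}), which reads $(\Delta L_{\vec{n}})_\rho^2 - \sum_k p_k (\Delta L_{\vec{n}})_{\Psi_k}^2 = \frac{1}{4}\sum_k p_k (\vec{n} \cdot \vec{r}'_k)^2$. Thus for $j = 2, 3$ the equality $(\Delta L_{\vec{n}_j})_\rho^2 = \sum_k p_k (\Delta L_{\vec{n}_j})_{\Psi_k}^2$ holds if and only if $\vec{n}_j \cdot \vec{r}'_k = 0$ for every $k$, that is, each $\vec{r}'_k$ is orthogonal to both $\vec{n}_2$ and $\vec{n}_3$.

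Finally I would close the equivalence purely geometrically. Because $\{\vec{n}_1, \vec{n}_2, \vec{n}_3\}$ is an orthonormal basis of $\mathbb{R}^3$, a vector is parallel to $\vec{n}_1$ if and only if it is orthogonal to both $\vec{n}_2$ and $\vec{n}_3$: expanding $\vec{r}'_k = \sum_{j=1}^3 (\vec{n}_j \cdot \vec{r}'_k)\,\vec{n}_j$, the conditions $\vec{n}_2 \cdot \vec{r}'_k = \vec{n}_3 \cdot \vec{r}'_k = 0$ leave only the $\vec{n}_1$ component, and conversely $\vec{r}'_k \parallel \vec{n}_1$ kills both other projections. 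Applying this equivalence term by term identifies the minimization condition in $\vec{n}_1$ with the simultaneous maximization conditions in $\vec{n}_2$ and $\vec{n}_3$, which proves the corollary. I do not expect a genuine obstacle here; the only point requiring care is the reduction from a vanishing weighted sum of squares to the vanishing of each individual term, which relies on restricting to the strictly positive weights $p_k$.
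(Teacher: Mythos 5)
Your proof is correct, but it follows a genuinely different route from the paper's. The paper proves the corollary algebraically: it writes the pure-state sum equality $(\Delta L_{\vec{n}_1})_{\Psi_k}^2+(\Delta L_{\vec{n}_2})_{\Psi_k}^2+(\Delta L_{\vec{n}_3})_{\Psi_k}^2=\tfrac12$ for each member of the decomposition, averages with the weights $p_k$, subtracts the mixed-state equality~(\ref{eq : qfivareq}), and observes that the resulting sum of two non-negative deficits (non-negative by concavity of the variance) can vanish only if each deficit vanishes separately; the converse runs the same computation backwards. You instead pass through the parallel axis theorems~(\ref{eq:parallel_axis}) and~(\ref{eq:parallel_axis_var}), converting the minimization condition into $\vec{n}_1\times\vec{r}'_k=0$ for all $k$ and each maximization condition into $\vec{n}_j\cdot\vec{r}'_k=0$ for all $k$, and then close the equivalence with the orthonormality of $\{\vec{n}_1,\vec{n}_2,\vec{n}_3\}$. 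Both arguments rest only on results established earlier in the paper, so both are legitimate; the paper's version is shorter and does not mention Bloch vectors at all, while yours buys an explicit geometric characterization of the optimal decompositions (all separation vectors parallel to $\vec{n}_1$), which ties the corollary directly to the earlier construction of minimal and maximal decompositions and makes the uniqueness of the minimizing chord visible. Your one point of care---discarding zero-weight terms before concluding that each summand of a vanishing non-negative sum is zero---is handled correctly, and the degenerate case $\vec{r}'_k=\vec{0}$ cannot occur since $\rho$ is mixed while the $\vec{r}_k$ lie on the sphere's surface.
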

\begin{proof}
Let $\{p_k, \ket{\Psi_k}\}$ be the decomposition of $\rho$ that achieves the minimum in Eq.~(\ref{eq:mindecomp}) for the operator $L_{\vec{n}_1}$.
For each $\ket{\Psi_k}$, we obtain from Eq.~(\ref{eq:var1/2})
\begin{equation} \label{eq:puresum}
   (\Delta L_{\vec{n}_1})_{\ket{\Psi_k}} ^2 + (\Delta L_{\vec{n}_2})_{\ket{\Psi_k}} ^2 + (\Delta L_{\vec{n}_3})_{\ket{\Psi_k}} ^2 =  \frac{1}{2}.
\end{equation}
Multiplying Eq.~(\ref{eq:puresum}) by $p_k$ and summing over $k$, we obtain
\begin{equation} \label{eq:puresum2}
\begin{split}
   \frac{1}{4}F_Q[\rho, L_{\vec{n}_1}] + \sum_{k} p_k (\Delta L_{\vec{n}_2})_{\ket{\Psi_k}} ^2 + \sum_{k} p_k (\Delta L_{\vec{n}_3})_{\ket{\Psi_k}} ^2  = \frac{1}{2}.
   \end{split}
\end{equation}
Taking the difference between Eqs.~(\ref{eq : qfivareq}) and (\ref{eq:puresum2}) yields
\begin{equation}
 [(\Delta L_{\vec{n}_2})_\rho ^2 - \sum_{k} p_k (\Delta L_{\vec{n}_2})_{\ket{\Psi_k}} ^2]+ [(\Delta L_{\vec{n}_3})_\rho ^2 - \sum_{k} p_k (\Delta L_{\vec{n}_3})_{\ket{\Psi_k}} ^2]= 0.
\end{equation}
Since the two terms in brackets are positive due to the concavity of the variance, they vanish separately, proving the result.

For the reverse direction, we suppose there exists a decomposition that achieves the maximum of Eq.~(\ref{eq:varroof}) for both $L_{\vec{n}_1}$ and $L_{\vec{n}_2}$. Following steps analogous to those used before, we can write
\begin{equation}
\begin{split}
   \sum_{k} p_k (\Delta L_{\vec{n}_1})_{\ket{\Psi_k}} ^2 + (\Delta L_{\vec{n}_2})_{\rho} ^2 + (\Delta L_{\vec{n}_3})_{\rho} ^2  = \frac{1}{2},
   \end{split}
\end{equation}
and taking the difference with Eq.~(\ref{eq : qfivareq}) now yields
\begin{equation}
    \sum_{k} p_k (\Delta L_{\vec{n}_1})_{\ket{\Psi_k}} ^2 = \frac{1}{4}F_Q[\rho, L_{\vec{n}_1}],
\end{equation}
proving the statement.
\end{proof}

\section{Quantum and classical sensitivity limits for phase estimation with an unknown axis}
In the context of quantum metrology, the quantum Cram\'{e}r-Rao bound (\ref{eq:CRB}) allows us to interpret the QFI as the quantum limit on the precision that can be achieved for an estimation of the parameter $\theta$, generated by $A$, using the state $\rho$~\cite{Pezze_QTOPE}. This limit can be attained by an optimal choice of the measurement observable and the estimator~\cite{Braunstein&Caves}. A suitable preparation of the probe state $\rho$ can additionally improve the sensitivity up to the ultimate quantum limit~\cite{GiovannettiPRL2006}, where increasing sensitivity typically demands larger amounts of multipartite entanglement~\cite{PS09,HyllusPRA2012,TothPRA2012,RenPRL2021,T_th_2014,Pezze_QTOPE}.

The choice of the optimal quantum state $\rho$ that maximizes $F_Q[\rho,A]$, however, depends on the precise knowledge of the phase-imprinting generator $A$. Suppose that $\theta$ describes a phase shift generated by $L_{\vec{n}}$, but the direction $\vec{n}$ of the rotation is unknown at the moment where the state $\rho$ is prepared. One approach in such a situation, would be to maximize the average sensitivity~\cite{BartlettRMP2007,TothPRA2012,LiPRA2013,platonic,GoldbergPRA2018,YadinNatCommun2021,GoldbergJPPhot2021,GoldbergPRL2021}. With the convexity~(\ref{eq:QFI<4Var}) and additivity~(\ref{eq:addQFI}) properties of the QFI, sums containing only the variance or the QFI such as Eqs.~(\ref{eq:var1/2}), (\ref{eq:qfisum}), (\ref{eq:var_twosum}), and (\ref{eq:QFI_twosum}) directly imply bounds on the average sensitivity of $N$-qubit separable states that were first derived in Ref.~\cite{TothPRA2012}; see also Ref.~\cite{Hofmann} for similar methods based on the average variance. Here, we focus on an alternative figure of merit, given by the ``worst-case", i.e., the minimal sensitivity that can be achieved in all possible directions; see also Refs.~\cite{GirolamiPRL2014,WolfNatCommun2019} for similar approaches. In the following, we focus on $N$-qubit systems with collective spin $N/2$ and use our results from the previous sections to identify the ultimate classical and quantum limits on the minimal sensitivity as well as the respective optimal quantum states that achieve them.

Formally, we define these limits as
\begin{equation}
B(\mathcal{R},\Omega) := \max_{\rho\in\mathcal{R}}\min_{\vec{n} \in \Omega} F_Q[\rho,L_{\vec{n}}],
\end{equation}
where $\mathcal{R}$ is a set of quantum states and $\Omega$ describes the set of possible rotation axes. We are interested in the situations in which the rotation axis $\vec{n}$ is limited to a plane, i.e., $\Omega=\mathbb{R}^2$, or it can be chosen arbitrarily in three dimensions, $\Omega=\mathbb{R}^3$. The quantum limit corresponds to the unconstrained maximization over $\mathcal{R}=\mathcal{S}$, where $\mathcal{S}$ is the set of all quantum states. The classical limit is obtained by maximizing only over the class of separable states $\mathcal{R}=\mathcal{S}_{\mathrm{sep}}$. 

Identifying the classical bounds on measurable properties of quantum states, such as their sensitivity or fluctuations, naturally leads to entanglement witnesses, see Refs.~\cite{PS09,HyllusPRA2012,TothPRA2012,RenPRL2021,T_th_2014,Pezze_QTOPE,GuehneToth,squeezing_ent,Hofmann,LiPRA2013} for other examples following this approach. By construction, any state whose properties violate this bound must necessarily be entangled.

We first focus on the quantum limit for $\Omega=\mathbb{R}^2$. We make use of basic properties of the quantum Fisher matrix (see, e.g., ~\cite{PhysRevA.82.012337,GessnerPRL2018}), i.e., $F_Q[\rho,L_{\vec{n}}] = \vec{n}^\intercal \mathbf{F}[\rho,\vec{L}] \vec{n}$ and $\mathbf{F}[\rho,\vec{L}] \leq 4\mathbf{\Gamma}[\rho,\vec{L}]$, where $\vec{L}=\{L_x,L_y,L_z\}^\intercal$ is a vector of angular momentum operators and we introduced the covariance matrix with elements $\mathbf{\Gamma}[\rho,\vec{L}]_{ij} = \frac{1}{2} \left\langle L_i L_j + L_j L_i \right\rangle_\rho - \langle L_i \rangle_\rho \langle L_j \rangle_\rho$, where $i,j \in \{x,y,z\}$. 

Without loss of generality, we consider $\Omega$ to be the $xy$ plane of $\mathbb{R}^3$. The minimal sensitivity corresponds to the smallest eigenvalue of the corresponding $2\times 2$ block of $\mathbf{F}[\rho,\vec{L}]$. We obtain
\begin{align}\label{eq:optim}
\min_{\vec{n} \in \mathbb{R}^2} F_Q[\rho,L_{\vec{n}}] & = \frac{1}{2} \bigg( F_Q[\rho, L_x] + F_Q[\rho, L_y] \notag\\&\qquad- \sqrt{(F_Q[\rho, L_x]-F_Q[\rho, L_y])^2 + 4\mathbf{F}[\rho,\vec{L}]_{xy}^2 } \bigg) \notag \\ 
& \stackrel{\mathrm{(i)}}{\leq} 2\bigg((\Delta L_x)_{\rho}^2 + (\Delta L_y)_{\rho}^2 \notag\\&\qquad- \sqrt{(\Delta L_x)_{\rho}^2-(\Delta L_y)_{\rho}^2+4\mathbf{\Gamma}[\rho,\vec{L}]_{xy}^2} \bigg)  \notag\\
 & \stackrel{\mathrm{(ii)}}{\leq} 2[(\Delta L_x)_{\rho} ^2 + (\Delta L_y)_{\rho} ^2]\notag \\
 & \stackrel{\mathrm{(iii)}}{\leq} 2 \left( \langle L_x^2 \rangle_{\rho} + \langle L_y^2 \rangle_{\rho} \right)\notag \\
 & = 2 \left(\frac{N(N+2)}{4} - \langle L_z^2 \rangle_{\rho} \right)\notag \\
 & \stackrel{\mathrm{(iv)}}{\leq} \frac{N(N+2)}{2}.
\end{align}
An optimal state that achieves this limit must saturate all inequalities in this derivation. This can be achieved by a state that (i) is pure, (ii)  has zero covariances and equal variances for $L_x$ and $L_y$, (iii) has zero expectation values for $L_x$ and $L_y$, and (iv) has zero expectation values for $\langle L_z^2 \rangle_{\rho}$. When $N$ is even, a state that unites all of these conditions is the so-called twin-Fock state, i.e., an eigenstate of $L_z$ with zero eigenvalue or, equivalently, a Dicke state containing the same number of spin-up and spin-down particles.

If the state is assumed to be separable, i.e., $\rho_{\mathrm{sep}}=\sum_{\gamma}p_{\gamma}\rho_{\gamma}^{(1)}\otimes\cdots\otimes\rho_{\gamma}^{(N)}$, where $p_{\gamma}$ is a probability distribution and $\rho_{\gamma}^{(i)}$ are local quantum states for the $i$th particle, we can make use of the convexity~(\ref{eq:QFI<4Var}) and additivity~(\ref{eq:addQFI}) properties of the QFI. Moreover, the angular momentum observables may be decomposed as $L_{\vec{n}}=\sum_{i=1}^NL_{\vec{n}}^{(i)}$, where $L_{\vec{n}}^{(i)}=\frac{1}{2}\vec{n}\cdot\vec{\sigma}^{(i)}$ acts on the $i$th qubit. We obtain the limit:
\begin{align} \label{eq:optim2}
\min_{\vec{n} \in \mathbb{R}^2} F_Q[\rho_{\mathrm{sep}},L_{\vec{n}}] 
 & \stackrel{\mathrm{(i)}}{\leq} \frac{1}{2} (F_Q[\rho_{\mathrm{sep}}, L_x] + F_Q[\rho_{\mathrm{sep}}, L_y])\notag \\ 
 & \stackrel{\mathrm{(ii)}}{\leq} \frac{1}{2} \sum_{\gamma} p_{\gamma} (F_Q[\rho_{\gamma}^{(1)}\otimes\cdots\otimes\rho_{\gamma}^{(N)}, L_x]\notag\\
 &\hspace{1.5cm}+ F_Q[\rho_{\gamma}^{(1)}\otimes\cdots\otimes\rho_{\gamma}^{(N)}, L_y]) \notag\\
 & = \frac{1}{2} \sum_{\gamma} p_{\gamma} \sum\limits_{i=1}^{N} (F_Q[\rho_{\gamma}^{(i)}, L_x^{(i)}] + F_Q[\rho_{\gamma}^{(i)}, L_y^{(i)}]) \notag\\
 & \stackrel{\mathrm{(iii)}}{\leq} \sum_{\gamma} p_{\gamma} \sum\limits_{i=1}^{N} \left[2\mathrm{Tr}\{(\rho_{\gamma}^{(i)})^2\}-1\right]\notag\\
  & \stackrel{\mathrm{(iv)}}{\leq} \sum_{\gamma} p_{\gamma} \sum\limits_{i=1}^{N} 1\notag\\
 & = N,
\end{align}
where in (iii) we used Eq.~(\ref{eq:QFI_twosum}). All the above inequalities are saturated by a pure product state [(ii) and (iv)], with a diagonal QFI matrix of equal $x$ and $y$ diagonal elements (i), and maximum variance in both the $x$ and $y$ directions (iii). Such states are given by eigenstates of $L_z$ with the extremal eigenvalue $\pm N/2$, and they correspond to products of $N$ identical qubit states, each one polarized along the $\pm z$ direction. We note that in the proof, Eq.~(\ref{eq:varsumreln2}) for variances can be used in place of Eq.~(\ref{eq:QFI_twosum}) in inequality (iii) to obtain the same result.

Next, we extend $\Omega$ to the entire $\mathbb{R}^3$. Since the minimal eigenvalue is bounded from above by the average eigenvalue, we obtain
\begin{align}\label{eq:minFR3}
\min_{\vec{n} \in \mathbb{R}^3} F_Q[\rho,L_{\vec{n}}] & \leq \frac{1}{3} \mathrm{Tr}\mathbf{F}[\rho,\vec{L}]  \notag\\ 
&\leq\frac{4}{3} \mathrm{Tr}\mathbf{\Gamma}[\rho,\vec{L}]  \notag\\
&\leq \frac{4}{3} (\langle L_x^2\rangle_{\rho}+\langle L_y^2\rangle_{\rho} +\langle L_z^2\rangle_{\rho} ) \notag\\
 & =\frac{N(N+2)}{3}.
\end{align}
This bound is achieved by pure states that have a diagonal covariance matrix with zero first moments and equal second moments for all three angular momentum observables. These conditions are satisfied by so-called anti-coherent states~\cite{Zimba}, which are known to optimize the average sensitivity in all three directions~\cite{platonic}, and the average sensitivity of Euler angles~\cite{GoldbergPRA2018}. Note that due to the saturation of the first inequality by these optimal states, Eq.~(\ref{eq:minFR3}) is equivalent to the average sensitivity that was considered in Refs.~\cite{platonic,TothPRA2012}. This bound for the average QFI is also saturated by Greenberger-Horne-Zeilinger states and Dicke states~\cite{TothPRA2012}, but these states do not satisfy the symmetry requirements under the exchange of axes to saturate also the first inequality in~(\ref{eq:minFR3}).

For separable states, we obtain, following steps analogous to the ones used before, the classical limit
\begin{align}
&\quad\min_{\vec{n} \in \mathbb{R}^3} F_Q[\rho_{\mathrm{sep}},L_{\vec{n}}] \notag\\
  &\leq\frac{1}{3} \sum_{\gamma} p_{\gamma} \sum\limits_{i=1}^{N} (F_Q[\rho_{\gamma}^{(i)}, L_x^{(i)}] + F_Q[\rho_{\gamma}^{(i)}, L_y^{(i)}]+F_Q[\rho_{\gamma}^{(i)}, L_z^{(i)}]) \notag\\
 & = \frac{1}{3} \sum_{k} p_{\gamma} \sum\limits_{i=1}^{N} \left[ 4\mathrm{Tr}\{(\rho_{\gamma}^{(i)})^2\} - 2\right]\notag\\
 &\leq \frac{2}{3}N.
\end{align}
where we have used Eq.~(\ref{eq:qfisum}) in the second step. The bound is saturated by pure product states with a diagonal covariance matrix and equal variances in all three directions. Again, the same bound can be obtained from the variance bound~(\ref{eq:varsumreln3}). This bound coincides with the classical limit on the average sensitivity that was derived in Ref.~\cite{TothPRA2012}.

To find the states that fulfill these constraints, it is instructive to first express the elements of the covariance matrix in terms of the Bloch vectors of the individual qubits $\vec{r_k}=(r^{(k)}_x, r^{(k)}_y, r^{(k)}_z)^\intercal$, which yields $\mathbf{\Gamma}[\rho,\vec{L}]_{ii} = \frac{1}{4} (1 - \sum\limits_{k=1}^{N} (r^{(k)}_i)^2)$ on the diagonal and $\mathbf{\Gamma}[\rho,\vec{L}]_{ij} = - \frac{1}{4} \sum\limits_{k=1}^{N} r^{(k)}_i r^{(k)}_j$ for $i\neq j$. The constraints can then be written in terms of the individual Bloch vectors as
\begin{align}
(r^{(k)}_x)^2 + (r^{(k)}_y)^2 + (r^{(k)}_z)^2  &= 1 \qquad\text{$\forall k$},\\
\sum\limits_{k=1}^{N} (r^{(k)}_x)^2 = \sum\limits_{k=1}^{N} (r^{(k)}_y)^2 &= \sum\limits_{k=1}^{N} (r^{(k)}_z)^2, \\
\sum\limits_{k=1}^{N} r^{(k)}_x r^{(k)}_y = \sum\limits_{k=1}^{N} r^{(k)}_x r^{(k)}_z = \sum\limits_{k=1}^{N} r^{(k)}_y r^{(k)}_z&=0.
\end{align}
These constraints can be satisfied only for $N>2$. For example, if $N$ is a multiple of 3, a state that saturates this bound is the product of the states with Bloch vectors $(1,0,0)^\intercal$, $(0,1,0)^\intercal$, and $(0,0,1)^\intercal$, repeated $N/3$ times. If $N$ is a multiple of 4, we take the product of the states with Bloch vectors $(-1,1,1)^\intercal/\sqrt{3}$, $(1,-1,1)^\intercal/\sqrt{3}$, $(1,1,-1)^\intercal/\sqrt{3}$, and $(1,1,1)^\intercal/\sqrt{3}$, repeated $N/4$ times.

In summary, we observe that
\begin{align}
B(\mathcal{S},\mathbb{R}^2)&=\frac{N}{2}(N+2),\\
B(\mathcal{S}_{\mathrm{sep}},\mathbb{R}^2)&=N, \label{b2_sep}\\
B(\mathcal{S},\mathbb{R}^3)&=\frac{N}{3}(N+2), \label{b3_all}\\
B(\mathcal{S}_{\mathrm{sep}},\mathbb{R}^3)&=\frac{2}{3}N. \label{b3_sep}
\end{align}
Previous studies have focused on the average sensitivity~\cite{platonic,TothPRA2012,LiPRA2013,GoldbergPRA2018}, which always implies bounds on the minimum. In particular, the upper bounds for the minimum sensitivity~(\ref{b3_all}) and (\ref{b3_sep}) follow directly from known bounds on the average sensitivity~\cite{TothPRA2012}. However, saturability of the minimum bounds is less clear, and the above derivations explicitly outline the conditions for states that saturate them. Notably, states that saturate the bound for the average sensitivity need not saturate the bound for the minimum, as they can be asymmetric, with unequal variances in different directions.

Finally, the bound~(\ref{b2_sep}) can be directly obtained by applying the separable limit $F_Q[\rho_{\mathrm{sep}}, L_{\vec{n}}] \leq N$ \cite{PS09} to both directions in the plane individually, i.e., $\min_{\vec{n} \in \mathbb{R}^2} F_Q[\rho_{\mathrm{sep}},L_{\vec{n}}]  \leq \frac{1}{2} (F_Q[\rho_{\mathrm{sep}}, L_x] + F_Q[\rho_{\mathrm{sep}}, L_y]) \leq  N$. Our derivation confirms that this bound is indeed tight. Note that the same procedure for rotations in all three directions yields the weaker bound of $\min_{\vec{n} \in \mathbb{R}^3} F_Q[\rho_{\mathrm{sep}},L_{\vec{n}}]  \leq N$ compared to the tight bound~(\ref{b3_sep}).

\section{Conclusions}
The convex-roof property~(\ref{eq:mindecomp}) allows us to interpret the quantum Fisher information as the minimal average variance of all pure-state decompositions. This gives rise to an alternative measure of quantum fluctuations for mixed states, constructed in a way that the contribution of the classical uncertainty about the state due to the mixing of pure states, is minimal. In contrast, the standard variance of a mixed state maximizes this contribution, as is shown by its concave-roof property~(\ref{eq:varroof}).

Employing the quantum Fisher information as a measure of quantum fluctuations leads to state-independent preparation uncertainty relations that are sharper than relations involving only variances. For the case of a spin-$1/2$ particle, we provided exact equalities that express the deviation from the minimal uncertainty in terms of the purity. Using the geometry of the Bloch sphere, we further demonstrated how extremal pure-state decompositions can be constructed and interpreted, revealing an analogy with the classical moment of inertia of rigid bodies. Finally, we used these results to derive the classical and quantum limits on the estimation of a phase parameter generated by an unknown rotation axis.

\section*{Note added} Independently of our work, the convex-roof property of the QFI was used to derive uncertainty relations by T\'oth and Fr\"owis~\cite{TothFrowis}.

\begin{acknowledgments}
  We thank A. Smerzi and R. F. Werner for stimulating discussions. This work was funded by the LabEx ENS-ICFP: ANR-10-LABX-0010/ANR-10-IDEX-0001-02 PSL*. This work received funding from Ministerio de Ciencia e Innovaci\'{o}n (MCIN) / 
  Agencia Estatal de Investigaci\'on (AEI) for Project No. PID2020-115761RJ-I00 and support of a fellowship from ``la Caixa” Foundation (ID 100010434) and from the European Union’s Horizon 2020 research and innovation program under Marie Sk\l{}odowska-Curie Grant Agreement No. 847648, fellowship code LCF/BQ/PI21/11830025.
\end{acknowledgments}

\end{document}